\newtheorem{definition}{{Definition}}
\newtheorem{theorem}{{Theorem}}
\newtheorem{lemma}{{Lemma}}
\newtheorem{corollary}{{Corollary}}
\DeclareMathAlphabet{\mathpzc}{OT1}{pzc}{m}{it}
\begin{document}

\tikzstyle{line} = [draw, -stealth,semithick]
\tikzstyle{block} = [draw, rectangle, text width=7em, text centered, minimum height=12mm, node distance=8em,semithick]
\tikzstyle{chblock} = [draw, rectangle, text width=4em, text centered, minimum height=40mm, node distance=8em,semithick]

\title{Outer Bounds on the Admissible Source Region for
Broadcast Channels with Correlated Sources}


\author{Kia Khezeli and Jun Chen}


%


\maketitle

\begin{abstract}
Two outer bounds on the admissible source region for
broadcast channels with correlated sources are presented: the first one is strictly tighter than the existing outer bound by Gohari and Anantharam while the second one provides a complete characterization of the admissible source region in the case where the two sources are conditionally independent given the common part. These outer bounds are deduced from the general necessary conditions established for the lossy source broadcast problem via suitable comparisons between the virtual broadcast channel (induced by the source and the reconstructions) and the physical broadcast channel.
\end{abstract}


\begin{keywords}
Bandwidth mismatch, broadcast channel, capacity region, deterministic channel, joint source-channel coding.
\end{keywords}

%

\section{Introduction}\label{sec:introduction}

Let $\{S(t)\}_{t=1}^\infty$ be an i.i.d. random process with marginal distribution $p_S$ over alphabet $\mathcal{S}$. In the lossy source broadcast problem (see Fig. \ref{fig:lossy}), an encoding function $f^{(m,\rho m)}:\mathcal{S}^m\rightarrow\mathcal{X}^{\rho m}$ maps a source block $S^m\triangleq(S(1),\cdots,S(m))$  to a channel input block $X^{\rho m}\triangleq(X(1),\cdots,X(\rho m))$, which is sent over a discrete memoryless broadcast channel $p_{Y_1,Y_2|X}$ with input alphabet $\mathcal{X}$ and output alphabets $\mathcal{Y}_i$, $i=1,2$; at receiver $i$, a decoding function $g^{(\rho m,m)}_i:\mathcal{Y}^{\rho m}_i\rightarrow\hat{\mathcal{S}}^m_i$ maps the channel output block $Y^{\rho m}_i\triangleq(Y_i(1),\cdots,Y_i(\rho m))$ (generated by $X^{\rho m}$) to a source reconstruction block $\hat{S}^m_i\triangleq(\hat{S}_i(1),\cdots,\hat{S}_i(m))$, $i=1,2$. The number of channel uses per source sample, i.e., $\rho$, is referred to as the bandwidth expansion ratio. We assume that  $\mathcal{S}$, $\hat{\mathcal{S}}_1$, $\hat{\mathcal{S}}_2$, $\mathcal{X}$, $\mathcal{Y}_1$, and $\mathcal{Y}_2$ are finite sets throughout the paper.

\begin{definition}\label{def:systemA2}
 Let $w_i:\mathcal{S}\times\hat{\mathcal{S}}_i\rightarrow[0,\infty)$, $i=1,2$, be two distortion measures. We say distortion pair $(d_1,d_2)$ is achievable under distortion measures $w_1$ and $w_2$ subject to bandwidth expansion constraint $\kappa$ if, for every $\epsilon>0$, there exist encoding function $f^{(m,\rho m)}:\mathcal{S}^m\rightarrow\mathcal{X}^{\rho m}$ and decoding functions $g^{(\rho m,m)}_i:\mathcal{Y}^{\rho m}_i\rightarrow\hat{\mathcal{S}}^m_i$, $i=1,2$, with $\rho\leq\kappa+\epsilon$, such that
\begin{align}
\frac{1}{m}\sum\limits_{t=1}^m\mathbb{E}[w_i(S(t),\hat{S}_i(t))]\leq d_i+\epsilon,\quad i=1,2.\nonumber
\end{align}
\end{definition}

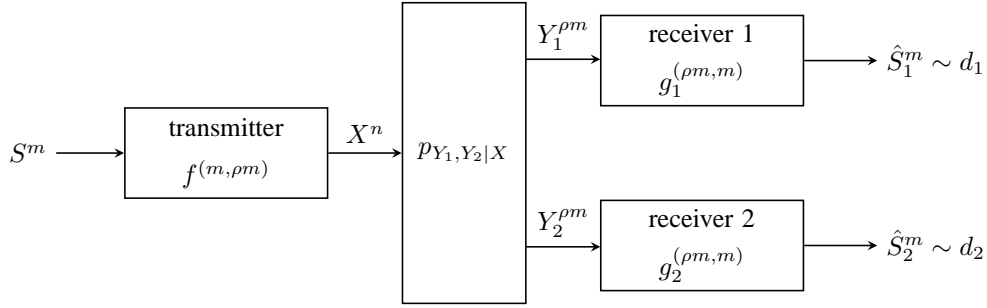
\begin{figure}
\centering
\begin{tikzpicture}
\node [] (Source) {$S^m$};
\node [block, right of=Source , xshift=-0.5em] (Enc) {transmitter\\ $f^{(m,\rho m)}$};
\node [chblock, right of=Enc , xshift=1em] (Channel) {$p_{Y_1,Y_2|X}$};\
\node [block, right of=Channel , xshift=1em, yshift=3.5em] (Dec1) {receiver 1\\ $g_1^{(\rho m,m)}$};
\node [block, right of=Channel , xshift=1em, yshift=-3.5em] (Dec2) {receiver 2\\ $g_2^{(\rho m,m)}$};
\node [right of=Dec1 , xshift=6em] (Rec1) {$\hat{S}_{1}^m\sim d_1$};
\node [right of=Dec2 , xshift=6em] (Rec2) {$\hat{S}_{2}^m\sim d_2$};
\path [line] (18.85em,3.55em)--node[xshift=0em,above] {$Y_{1}^{\rho m}$}(21.65em,3.55em);
\path [line] (18.85em,-3.55em)--node[xshift=0em,above] {$Y_{2}^{\rho m}$}(21.65em,-3.55em);
\path [line] (Source)--(Enc);
\path [line] (Enc) -- node[xshift=0em,above] {$X^n$} (Channel);
\path [line] (Dec1)--(Rec1);
\path [line] (Dec2)--(Rec2);
\end{tikzpicture}
\caption{The lossy broadcast problem}\label{fig:lossy}
\end{figure}



A special case of the lossy source broadcast problem, sometimes referred to as broadcasting correlated sources (see Fig. \ref{fig:correlated}), has received particular attention. In this case, $S(t)=(S_1(t),S_2(t))$ with $S_1(t)$ and $S_2(t)$ jointly distributed according to $p_{(S_1,S_2)}$ over alphabet $\mathcal{S}_1\times\mathcal{S}_2$, $t=1,2,\cdots$, and
receiver $i$ wishes to reconstruct $\{S_i(t)\}_{t=1}^\infty$ almost losslessly, $i=1,2$.

\begin{definition}\label{def:lossless}
A source distribution $p_{(S_1,S_2)}$ is said to be admissible for broadcast channel $p_{Y_1,Y_2|X}$ subject to bandwidth expansion constraint $\kappa$ if, for every $\epsilon>0$, there exist encoding function $f^{(m,\rho m)}:\mathcal{S}^m_1\times\mathcal{S}^m_2\rightarrow\mathcal{X}^n$ and decoding functions $g^{(\rho m,m)}_i:\mathcal{Y}^{\rho m}_i\rightarrow\mathcal{S}^m_i$, $i=1,2$, with $\rho\leq\kappa+\epsilon$, such that
\begin{align*}
\frac{1}{m}\sum\limits_{t=1}^m\mbox{Pr}(S_i(t)\neq\hat{S}_i(t))\leq\epsilon,\quad i=1,2.
\end{align*}
The set of all such $p_{(S_1,S_2)}$ is referred to as the admissible source region for broadcast channel $p_{Y_1,Y_2|X}$ subject to bandwidth expansion constraint $\kappa$.
\end{definition}

Remark: Definition \ref{def:lossless} is a special case of Definition \ref{def:systemA2} with $d_1=d_2=0$ and $w_i:\mathcal{S}\times\mathcal{S}_i\rightarrow\{0,1\}$ given by
\begin{align}
w_i((s_1,s_2),\hat{s}_i)=\left\{
                 \begin{array}{ll}
                   0, & s_i=\hat{s}_i \\
                   1, & \mbox{otherwise}
                 \end{array}
               \right.,\quad i=1,2.\label{eq:measure}
\end{align}
It is worth mentioning that, for the problem of broadcasting correlated sources, typically the more restrictive block error probability constraints are adopted. However, it is clear that outer bounds derived under average symbol error probability constraints automatically hold under block error probability constraints.

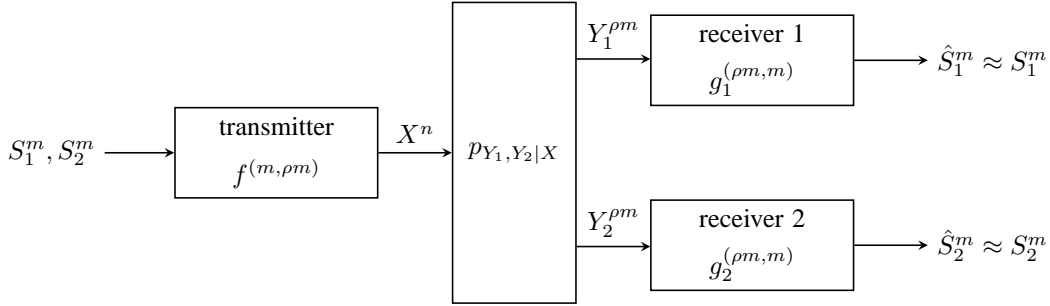
\begin{figure}
\centering
\begin{tikzpicture}
\node [] (Source) {$S^m_1,S^m_2$};
\node [block, right of=Source , xshift=0.5em] (Enc) {transmitter\\ $f^{(m,\rho m)}$};
\node [chblock, right of=Enc , xshift=1em] (Channel) {$p_{Y_1,Y_2|X}$};\
\node [block, right of=Channel , xshift=1em, yshift=3.5em] (Dec1) {receiver 1\\ $g_1^{(\rho m,m)}$};
\node [block, right of=Channel , xshift=1em, yshift=-3.5em] (Dec2) {receiver 2\\ $g_2^{(\rho m,m)}$};
\node [right of=Dec1 , xshift=6.25em] (Rec1) {$\hat{S}_{1}^m\approx S^m_1$};
\node [right of=Dec2 , xshift=6.25em] (Rec2) {$\hat{S}_{2}^m\approx S^m_2$};
\path [line] (19.85em,3.55em)--node[xshift=0em,above] {$Y_{1}^{\rho m}$}(22.65em,3.55em);
\path [line] (19.85em,-3.55em)--node[xshift=0em,above] {$Y_{2}^{\rho m}$}(22.65em,-3.55em);
\path [line] (Source)--(Enc);
\path [line] (Enc) -- node[xshift=0em,above] {$X^n$} (Channel);
\path [line] (Dec1)--(Rec1);
\path [line] (Dec2)--(Rec2);
\end{tikzpicture}
\caption{Broadcasting correlated sources}\label{fig:correlated}
\end{figure}

Han and Costa \cite{HC87} derived an inner bound on the admissible source region; see \cite{KN09} for a minor correction and \cite{MK09} for an alternative characterization. Outer bounds were established by Gohari and Anantharam  and by Kramer, Liang, and Shamai. Note that, due to the lack of cardinality bounds on the auxiliary random variables, neither the original version of the Gohari-Anantharam outer bound \cite{GA08,GA13a} nor the Kramer-Liang-Shamai outer bound \cite{KLS09} is directly computable. Thanks to \cite{Nair11}, a computable characterization of the Gohari-Anantharam outer bound has been found recently \cite{GA13b,KC14}. On the other hand, it is difficult, if not impossible, to express the Kramer-Liang-Shamai outer bound in a computable form because of the fact that certain auxiliary random variables involved in this bound are constrained\footnote{If such a constraint is removed, then the Gohari-Anantharam outer bound is at least as tight as the Kramer-Liang-Shamai outer bound.} to be i.i.d. copies of the source variables. For this reason, only the Gohari-Anantharam outer bound is considered in the present work.

In this paper we establish two necessary conditions for the lossy source broadcast problem. Both conditions are built upon the intuition that the virtual broadcast channel (induced by the source and the reconstructions) is dominated by the physical broadcast channel. Our effort is largely devoted to seeking mathematical formulations that can capture, to a certain extent, this vague intuition. It will be seen that the notion of dominance, which has a precise definition in the point-to-point case due to the source-channel separation theorem, permits several possible generalizations to the broadcast channel setting, and each generalization gives rise to a necessary condition for the lossy source broadcast problem. These necessary conditions, when specialized to the problem of broadcasting correlated sources, yield two outer bounds on the admissible source region: the first one is strictly tighter than the Gohari-Anantharam outer bound while the second one provides a complete characterization of the admissible source region in the case where the two sources are conditionally independent given the common part.

The rest of this paper is organized as follows. We explain our general approach in Section \ref{sec:versus}. Section \ref{sec:Review} contains a short review of the relevant capacity results for broadcast channels. The necessary conditions for the lossy source broadcast problem and the induced outer bounds on the source admissible region are presented in Sections \ref{sec:I} and \ref{sec:II}. We conclude the paper in Section \ref{sec:conclusion}.


\section{Virtual Channel versus Physical Channel}\label{sec:versus}

For the purpose of illustrating our general approach, it is instructive to first consider the point-to-point communication problem. Specifically, in the point-to-point setting, an encoding function $f^{(m,\rho m)}:\mathcal{S}^m\rightarrow\mathcal{X}^{\rho m}$ maps a source block $S^m$  to a channel input block $X^{\rho m}$, which is sent over a discrete memoryless channel $p_{Y|X}$ with input alphabet $\mathcal{X}$ and output alphabet $\mathcal{Y}$; at the receiver end, a decoding function $g^{(\rho m,m)}:\mathcal{Y}^{\rho m}\rightarrow\hat{\mathcal{S}}^m$ maps the channel output block $Y^{\rho m}$ (generated by $X^{\rho m}$) to a source reconstruction block $\hat{S}^m$. For any conditional distribution $p_{\hat{S}^m|S^m}$, let $p_{\hat{S}|S}$ be its single-letterized version defined as
\begin{align*}
p_{\hat{S}|S}(\hat{s}|s)=\frac{1}{m}\sum\limits_{t=1}^mp_{\hat{S}(t)|S(t)}(\hat{s}|s),
\end{align*}
where
\begin{align*}
p_{\hat{S}(t)|S(t)}(\hat{s}|s)=\sum\limits_{\substack{s^m:s(t)=s \\ \hat{s}^m:\hat{s}(t)=\hat{s}}}p_{\hat{S}^m|S^m}(\hat{s}^m|s^m)\prod\limits_{t':t'\neq t}p_S(s(t')).
\end{align*}
One can readily verify that
\begin{align}
\mathbb{E}[w(S,\hat{S})]=\frac{1}{m}\sum\limits_{t=1}^m\mathbb{E}[w(S(t),\hat{S}(t))]\label{eq:predis}
\end{align}
for any distortion measure $w:\mathcal{S}\times\hat{\mathcal{S}}\rightarrow[0,\infty)$. We say that $p_{\hat{S}^m|S^m}$ is degraded with respect to $p_{Y^{\rho m}|X^{\rho m}}$ (where $p_{Y^{\rho m}|X^{\rho m}}(y^{\rho m}|x^{\rho m})=\prod_{q=1}^{\rho m}p_{Y|X}(y(q)|x(q))$) if
\begin{align*}
p_{\hat{S}^m|S^m}(\hat{s}^m|s^m)=\sum\limits_{x^{\rho m},y^{\rho m}}p_{X^{\rho m}|S^m}(x^{\rho m}|s^m)p_{Y^{\rho m}|X^{\rho m}}(y^{\rho m}|x^{\rho m})p_{\hat{S}^m|Y^{\rho m}}(\hat{s}^m|y^{\rho m})
\end{align*}
for some conditional distributions $p_{X^{\rho m}|S^m}$ and $p_{\hat{S}^m|Y^{\rho m}}$; note that, in the point-to-point communication problem, we have
\begin{align*}
&p_{X^{\rho m}|S^m}(x^{\rho m}|s^m)=\mathbb{I}(x^{\rho m}=f^{(m,{\rho m})}(s^m)),\\
&p_{\hat{S}^m|Y^{\rho m}}(\hat{s}^m|y^{\rho m})=\mathbb{I}(\hat{s}^m=g^{({\rho m},m)}(y^{\rho m})),
\end{align*}
where $\mathbb{I}(\cdot)$ is the indicator function.   We shall refer to an arbitrary conditional distribution $p_{\hat{S}|S}$ as a virtual channel, and say that it is realizable through the physical channel $p_{Y|X}$ with bandwidth expansion ratio $\rho$ if it can be obtained, via single-letterization, from certain $p_{\hat{S}^m|S^m}$ degraded\footnote{Since $p_{\hat{S}^m|S^m}$ is only required to be degraded with respect to $p_{Y^{\rho m}|X^{\rho m}}$, we essentially allow non-deterministic encoding and decoding functions.
 However, it can be shown via a standard derandomization argument that restricting encoding and decoding functions to deterministic ones does not affect the set of (asymptotically) realizable virtual channels.} with respect to $p_{Y^{\rho m}|X^{\rho m}}$.  It is worth emphasizing that a realizable $p_{\hat{S}|S}$ is not necessarily degraded with respect to $p_{Y|X}$. Indeed, even in the bandwidth-matched case (i.e., $\rho=1$), there might not exist conditional distributions $p_{X|S}$ and $p_{\hat{S}|Y}$ such that
\begin{align*}
p_{\hat{S}|S}(\hat{s}|s)=\sum\limits_{x,y}p_{X|S}(x|s)p_{Y|X}(y|x)p_{\hat{S}|Y}(\hat{s}|y)
\end{align*}
since otherwise the end-to-end distortion $\frac{1}{m}\sum_{t=1}^m\mathbb{E}[w(S(t),\hat{S}(t))]$ could be achieved\footnote{One could simply use source variable $S$ to generate channel input $X$ via $p_{X|S}$, and use channel output $Y$ to generate reconstruction variable $\hat{S}$ via $p_{\hat{S}|Y}$. In light of (\ref{eq:predis}), the resulting distortion $\mathbb{E}[w(S,\hat{S})]$ is the same as $\frac{1}{m}\sum_{t=1}^m\mathbb{E}[w(S(t),\hat{S}(t))]$.} without coding. Nevertheless, it can be shown that every realizable $p_{\hat{S}|S}$ is dominated by $p_{Y|X}$ in the sense that
\begin{align}
I(S;\hat{S})\leq\rho I(X;Y)\label{eq:ptpdom}
\end{align}
for some input distribution $p_X$. In a certain sense, (\ref{eq:ptpdom}) is the only connection between a generic realizable virtual channel and the physical channel. Indeed, the source-channel separation theorem essentially asserts that a virtual channel $p_{\hat{S}|S}$ is (asymptotically) realizable through the physical channel $p_{Y|X}$ with bandwidth expansion ratio $\rho$ if and only if $p_{\hat{S}|S}$ is dominated by $p_{Y|X}$ in the sense of (\ref{eq:ptpdom}). Note that, given bandwidth expansion ratio $\rho$, the achievability of end-to-end distortion $d$ is equivalent to the existence of a realizable virtual channel $p_{\hat{S}|S}$ satisfying $\mathbb{E}[w(S,\hat{S})]\leq d$. More generally, one can formulate the question of determining whether there exists a realizable virtual channel in a prescribed set; imposing distortion constraints can be viewed as a specific way of choosing such sets.

This perspective can also be adopted in the broadcast channel setting. For any conditional distribution $p_{\hat{S}^m_{1},\hat{S}^m_{2}|S^m}$, we define its single-letterized version $p_{\hat{S}_1,\hat{S}_2|S}$ as
\begin{align*}
p_{\hat{S}_1,\hat{S}_2|S}(\hat{s}_1,\hat{s}_2|s)=\frac{1}{m}\sum\limits_{t=1}^mp_{\hat{S}_1(t),\hat{S}_2(t)|S(t)}(\hat{s}_1,\hat{s}_2|s),
\end{align*}
where
\begin{align*}
p_{\hat{S}_1(t),\hat{S}_2(t)|S(t)}(\hat{s}_1,\hat{s}_2|s)=\sum\limits_{\substack{s^m:s(t)=s \\ \hat{s}^m_i:\hat{s}_i(t)=\hat{s}_i,\mbox{ } i=1,2}}p_{\hat{S}^m_1,\hat{S}^m_2|S^m}(\hat{s}^m_1,\hat{s}^m_2|s^m)\prod\limits_{t':t'\neq t}p_S(s(t')).
\end{align*}
It can be verified that
\begin{align*}
\mathbb{E}[w_i(S,\hat{S}_i)]=\frac{1}{m}\sum\limits_{t=1}^m\mathbb{E}[w_i(S(t),\hat{S}_i(t))]
\end{align*}
for any distortion measure $w_i:\mathcal{S}\times\hat{S}\rightarrow[0,\infty)$, $i=1,2$.
We say that  $p_{\hat{S}^m_{1},\hat{S}^m_{2}|S^m}$ is degraded with respect to $p_{Y^{\rho m}_1,Y^{\rho m}_2|X^{\rho m}}$ (where $p_{Y^{\rho m}_1,Y^{\rho m}_2|X^{\rho m}}(y^{\rho m}_1,y^{\rho m}_2|x^{\rho m})=\prod_{q=1}^{\rho m}p_{Y|X}(y_1(q),y_2(q)|x(q))$) if
\begin{align}
p_{\hat{S}^m_1,\hat{S}^m_2|S^m}(\hat{s}^m_1,\hat{s}^m_2|s^m)=\sum\limits_{x^{\rho m},y^{\rho m}_1,y^{\rho m}_2}p_{X^{\rho m}|S^m}(x^{\rho m}|s^m)p_{Y^{\rho m}_1,Y^{\rho m}_2|X^{\rho m}}(y^{\rho m}_1,y^{\rho m}_2|x^{\rho m})\prod\limits_{i=1}^2p_{\hat{S}^m_i|Y^{\rho m}_i}(\hat{s}^m_i|y^{\rho m}_i)\label{eq:multicond}
\end{align}
for some conditional distributions $p_{X^{\rho m}|S^m}$ and $p_{\hat{S}^m_i|Y^{\rho m}_i}$, $i=1,2$; note that
\begin{align*}
&p_{X^{\rho m}|S^m}(x^{\rho m}|s^m)=\mathbb{I}(x^{\rho m}=f^{(m,{\rho m})}(s^m)),\\
&p_{\hat{S}^m_i|Y^{\rho m}_i}(\hat{s}^m_i|y^{\rho m}_i)=\mathbb{I}(\hat{s}^m_i=g^{({\rho m},m)}_i(y^{\rho m}_i)),\quad i=1,2,
\end{align*}
in the lossy source broadcast problem.
We shall refer to an arbitrary conditional distribution $p_{\hat{S}_1,\hat{S}_2|S}$ as a virtual broadcast channel\footnote{It is worth mentioning that the idea of viewing the conditional distribution of the reconstructions given the source as a virtual broadcast channel was exploited earlier in \cite{TDS11a,CTDS14} through a different angle.}, and say that it is realizable through the physical broadcast channel $p_{Y_1,Y_2|X}$ with bandwidth expansion ratio $\rho$ if it can be obtained, via single-letterization, from certain $p_{\hat{S}^m_1,\hat{S}^m_2|S^m}$ degraded with respect to $p_{Y^{\rho m}_1,Y^{\rho m}_2|X^{\rho m}}$. The fundamental problem here is to determine whether there exists a realizable virtual broadcast channel satisfying the distortion constraints (or more generally, whether there exists a realizable virtual broadcast channel in a prescribed set). It is conceivable that every realizable virtual broadcast channel must be dominated, in a certain sense,  by the physical broadcast channel. However, it is  apparently a formidable task to develop a computable notion of dominance that can completely characterize the set of realizable virtual broadcast channels (since that would solve several long-standing open problems in network information theory). Nevertheless, if one does not insist on such a complete characterization, then it is indeed possible to establish certain connections between a generic realizable virtual broadcast channel and the physical broadcast channel by suitably generalizing (\ref{eq:ptpdom}). For example, it is straightforward to show that, if $p_{\hat{S}|S}$ is realizable through $p_{Y_1,Y_2|X}$ with bandwidth expansion ratio $\rho$, then $p_{\hat{S}|S}$ must be dominated by $p_{Y_1,Y_2|X}$ in the sense that
\begin{align*}
I(S;\hat{S}_i)\leq\rho I(X;Y_i),\quad i=1,2,
\end{align*}
for some input distribution $p_X$. This is by no means the only possible generalization of (\ref{eq:ptpdom}) to the broadcast channel setting, and two stronger notions of dominance will be presented in Sections \ref{sec:I} and \ref{sec:II}. Note that each notion of dominance implicitly provides an outer bound on the set of realizable virtual broadcast channels, which in turn yields a necessary condition for the achievability of any given distortion pair.

\section{Review of Capacity Results for Broadcast Channels}\label{sec:Review}

We shall give a brief review of certain capacity results for broadcast channels that are relevant to the notions of dominance developed in Sections \ref{sec:I} and \ref{sec:II}.
Let $p_{Y_1,Y_2|X}$ be a discrete memoryless broadcast channel  with input alphabet $\mathcal{X}$ and output alphabets $\mathcal{Y}_i$, $i=1,2$. A length-$n$ coding scheme (see Fig. \ref{fig:broadcast}) for $p_{Y_1,Y_2|X}$ consists of
\begin{itemize}
\item a common message $M_0$ and two private messages $M_i$, $i=1,2$, where $(M_0,M_1,M_2)$ is uniformly distributed over $\mathcal{M}_0\times\mathcal{M}_1\times\mathcal{M}_2$,

\item an encoding function $f^{(n)}:\mathcal{M}_0\times\mathcal{M}_1\times\mathcal{M}_2\rightarrow\mathcal{X}^n$ that maps $(M_0,M_1,M_2)$ to a channel input block $X^n_1$,

\item two decoding functions $g^{(n)}_i:\mathcal{Y}^n_i\rightarrow\mathcal{M}_0\times\mathcal{M}_i$, $i=1,2$, where $g^{(n)}_i$ maps the channel output block at receiver $i$, i.e., $Y^n_{i,1}$, to $(\hat{M}_{0i}, \hat{M}_i)$, $i=1,2$.
\end{itemize}

\begin{figure}
\centering
\begin{tikzpicture}
\node [] (Source) {$M_0,M_1,M_2$};
\node [block, right of=Source , xshift=1.5em] (Enc) {transmitter\\ $f^{(n)}$};
\node [chblock, right of=Enc , xshift=1em] (Channel) {$p_{Y_{1},Y_{2}|X}$};\
\node [block, right of=Channel , xshift=1em, yshift=3.5em] (Dec1) {receiver 1\\ $g_1^{(n)}$};
\node [block, right of=Channel , xshift=1em, yshift=-3.5em] (Dec2) {receiver 2\\ $g_2^{(n)}$};
\node [right of=Dec1 , xshift=6em] (Rec1) {$\hat{M}_{01},\hat{M}_1$};
\node [right of=Dec2 , xshift=6em] (Rec2) {$\hat{M}_{02},\hat{M}_2$};
\path [line] (Source)--(Enc);
\path [line] (Enc) -- node[xshift=0em,above] {$X_1^n$} (Channel);
\path [line] (20.85em,3.55em)--node[xshift=0em,above] {$Y_{1}^n$}(23.65em,3.55em);
\path [line] (20.85em,-3.55em)--node[xshift=0em,above] {$Y_{1}^n$}(23.65em,-3.55em);

\path [line] (Dec1)--(Rec1);
\path [line] (Dec2)--(Rec2);
\end{tikzpicture}
\caption{Broadcast channel with common and private messages}\label{fig:broadcast}
\end{figure}
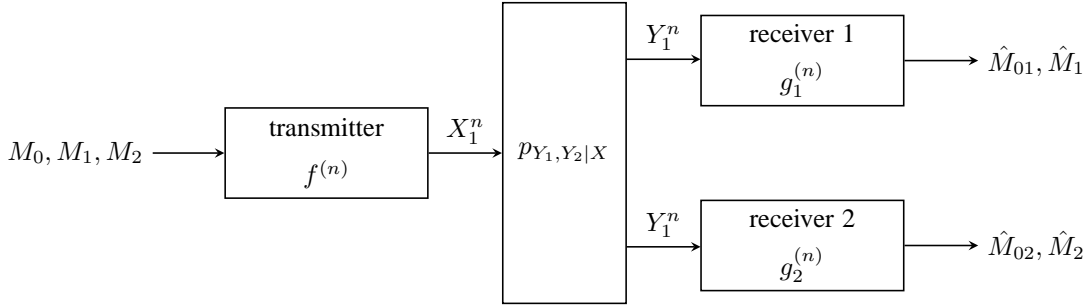

\begin{definition}
A rate triple $(R_0,R_1,R_2)\in\mathbb{R}^3_+$ is said to be achievable for broadcast channel $p_{Y_1,Y_2|X}$ if there exists a sequence of encoding functions $f^{(n)}:\mathcal{M}_0\times\mathcal{M}_1\times\mathcal{M}_2\rightarrow\mathcal{X}^n$ with $\frac{1}{n}\log|\mathcal{M}_i|\geq R_i$, $i=0,1,2$, and decoding functions $g^{(n)}_i:\mathcal{Y}^n_i\rightarrow\mathcal{M}_0\times\mathcal{M}_i$, $i=1,2$, such that
\begin{align*}
\lim\limits_{n\rightarrow\infty}\mbox{Pr}\{(\hat{M}_{01},\hat{M}_1)\neq(M_0,M_1)\mbox{ or }(\hat{M}_{02},\hat{M}_2)\neq(M_0,M_2)\}=0.
\end{align*}
The capacity region $\mathcal{C}(p_{Y_1,Y_2|X})$ is the closure of the set of all achievable $(R_0,R_1,R_2)$ for broadcast channel $p_{Y_1,Y_2|X}$.
\end{definition}




Let $\mathcal{C}_{\textsf{in}}(p_X, p_{Y_1,Y_2|X})$ denote the set of $(R_0,R_1,R_2)\in\mathbb{R}^3_+$ satisfying
\begin{align*}
&R_0\leq\min\{I(V_0;Y_1),I(V_0;Y_2)\},\\
&R_0+R_i\leq I(V_0,V_i;Y_i),\quad i=1,2,\\
&R_0+R_1+R_2\leq\min\{I(V_0;Y_1),I(V_0;Y_2)\}+I(V_1;Y_1|V_0)+I(V_2;Y_2|V_0)-I(V_1;V_2|V_0)
\end{align*}
for some $p_{V_0,V_1,V_2,X,Y_1,Y_2}=p_{V_0,V_1,V_2|X}p_Xp_{Y_1,Y_2|X}$. Here it suffices to consider $|\mathcal{V}_0|\leq|\mathcal{X}|+4$ and $|\mathcal{V}_i|\leq|\mathcal{X}|$, $i=1,2$; moreover, there is no loss of generality in assuming that $X$ is a deterministic function of $(V_0,V_1,V_2)$ \cite[Theorem 1]{GA12}\footnote{It is expected \cite{Gohari13} that  one can further improve the cardinality bounds on $\mathcal{V}_i$, $i=1,2$, to $|\mathcal{V}_1|+|\mathcal{V}_2|\leq|\mathcal{X}|+1$ by leveraging the techniques developed in \cite{AGN13}.}. Define
\begin{align*}
\mathcal{C}_{\textsf{in}}(p_{Y_1,Y_2|X})=\bigcup\limits_{p_X}\mathcal{C}_{\textsf{in}}(p_X, p_{Y_1,Y_2|X}).
\end{align*}
We have \cite[Theorem 1]{GP80} \cite[p. 391, Problem 10(c)]{CK81}
\begin{align*}
\mathcal{C}_{\textsf{in}}(p_{Y_1,Y_2|X})\subseteq\mathcal{C}(p_{Y_1,Y_2|X}).
\end{align*}
Note that $\mathcal{C}_{\textsf{in}}(p_{Y_1,Y_2|X})$ is widely known as Marton's inner bound (see \cite[Theorem 2]{Marton79} for the case $R_0=0$).

Let $\mathcal{C}_{\textsf{out}}(p_X, p_{Y_1,Y_2|X})$ denote the set of $(R_0,R_1,R_2)\in\mathbb{R}^3_+$ satisfying
\begin{align*}
&R_0\leq\min\{I(V_0;Y_1),I(V_0;Y_2)\},\\
&R_0+R_i\leq\min\{I(V_0;Y_1),I(V_0;Y_2)\}+I(V_i;Y_i|V_0),\quad i=1,2,\\
&R_0+R_1+R_2\leq\min\{I(V_0;Y_1),I(V_0;Y_2)\}+I(V_1;Y_1|V_0)+I(X;Y_2|V_0,V_1),\\
&R_0+R_1+R_2\leq\min\{I(V_0;Y_1),I(V_0;Y_2)\}+I(V_2;Y_2|V_0)+I(X;Y_1|V_0,V_2)
\end{align*}
for some $p_{V_0,V_1,V_2,X,Y_1,Y_2}=p_{V_0,V_1,V_2|X}p_Xp_{Y_1,Y_2|X}$. Here it suffices to consider $|\mathcal{V}_0|\leq|\mathcal{X}|+5$ and $|\mathcal{V}_i|\leq|\mathcal{X}|+1$, $i=1,2$ \cite{Nair11}\footnote{In fact, the cardinality bounds on  $\mathcal{V}_i$, $i=1,2$, can be further improved to $|\mathcal{V}_i|\leq |\mathcal{X}|$, $i=1,2$ \cite{Nair13p}.}. Define
\begin{align*}
\mathcal{C}_{\textsf{out}}(p_{Y_1,Y_2|X})=\bigcup\limits_{p_X}\mathcal{C}_{\textsf{out}}(p_X, p_{Y_1,Y_2|X}).
\end{align*}
We have \cite{Nair11}
\begin{align}
\mathcal{C}(p_{Y_1,Y_2|X})\subseteq\mathcal{C}_{\textsf{out}}(p_{Y_1,Y_2|X}).\label{eq:outerbound}
\end{align}

It is worth noting \cite{GP80,Nair13p} that $\mathcal{C}_{\textsf{in}}(p_X, p_{Y_1,Y_2|X})$ can be defined equivalently as the set of $(R_0,R_1,R_2)\in\mathbb{R}^3_+$ satisfying
\begin{align*}
&R_0\leq\min\{I(V_0;Y_1),I(V_0;Y_2)\},\\
&R_0+R_i\leq\min\{I(V_0;Y_1),I(V_0;Y_2)\}+ I(V_i;Y_i|V_0),\quad i=1,2,\\
&R_0+R_1+R_2\leq\min\{I(V_0;Y_1),I(V_0;Y_2)\}+I(V_1;Y_1|V_0)+I(V_2;Y_2|V_0)-I(V_1;V_2|V_0)
\end{align*}
for some $p_{V_0,V_1,V_2,X,Y_1,Y_2}=p_{V_0,V_1,V_2|X}p_Xp_{Y_1,Y_2|X}$. With this equivalent definition of $\mathcal{C}_{\textsf{in}}(p_X, p_{Y_1,Y_2|X})$, one can readily show
\begin{align*}
\mathcal{C}_{\textsf{in}}(p_X,p_{Y_1,Y_2|X})\subseteq\mathcal{C}_{\textsf{out}}(p_X,p_{Y_1,Y_2|X})
\end{align*}
by invoking the fact that, for any $p_{V_0,V_1,V_2,X,Y_1,Y_2}=p_{V_0,V_1,V_2|X}p_Xp_{Y_1,Y_2|X}$,
\begin{align*}
I(X;Y_2|V_0,V_1)&\geq I(V_2;Y_2|V_0,V_1)\\
&=I(V_2;Y_2,V_1|V_0)-I(V_1;V_2|V_0)\\
&\geq I(V_2;Y_2|V_0)-I(V_1;V_2|V_0)
\end{align*}
and similarly
\begin{align*}
I(X;Y_1|V_0,V_2)\geq I(V_1;Y_1|V_0)-I(V_1;V_2|V_0).
\end{align*}

\section{Approach I: Comparison of Certain Measurements Induced by Test Distributions}\label{sec:I}

\subsection{The Lossy Source Broadcast Problem}\label{sec:lossy}


As explained in Section \ref{sec:versus}, our goal is to develop suitable notions of dominance that can (partially) characterize the set of realizable virtual broadcast channels. One such notion is established in the following Lemma (with its proof relegated to Appendix \ref{app:ordering}). Roughly speaking, it shows that every realizable virtual broadcast channel $p_{\hat{S}_1,\hat{S}_2|S}$ (with input distribution $p_S$) is dominated by the physical broadcast channel $p_{Y_1,Y_2|X}$  (with certain input distribution $p_X$) in the sense that, given any test distribution $p_{U_0,\cdots,U_L|S}$ for $p_{\hat{S}_1,\hat{S}_2|S}$, one can find a corresponding test distribution $p_{V_0,\cdots,V_L|X}$ for $p_{Y_1,Y_2|X}$ such that certain measurements based on $p_{U_0,\cdots,U_L|S}p_Sp_{\hat{S}_1,\hat{S}_2|S}$ are less than or equal to those based on $p_{V_0,\cdots,V_L|X}p_Xp_{Y_1,Y_2|X}$ multiplied by bandwidth expansion ratio $\rho$.

\begin{lemma}\label{lem:ordering}
If a virtual broadcast channel $p_{\hat{S}_1,\hat{S}_2|S}$ is realizable through the physical broadcast channel $p_{Y_1,Y_2|X}$ with bandwidth expansion ratio $\rho$, then there exists an input distribution $p_X$ such that, for any $p_{U_0,\cdots,U_L,S,\hat{S}_1,\hat{S}_2}=p_{U_0,\cdots,U_L|S}p_Sp_{\hat{S}_1,\hat{S}_2|S}$, one can find $p_{V_0,\cdots,V_L,X,Y_1,Y_2}=p_{V_0,\cdots,V_L|X}p_Xp_{Y_1,Y_2|X}$ satisfying\footnote{Here $L$ and $k$ are arbitrary positive integers. We define $U_{\mathcal{A}}=(U_i)_{i\in\mathcal{A}}$ when $\mathcal{A}$ is a non-empty subset of $\{0,\cdots,L\}$ and define $U_{\mathcal{A}}=0$ otherwise; $V_{\mathcal{A}}$ is defined analogously.}
\begin{align*}
\sum\limits_{i=1}^kI(U_{\mathcal{A}_i};\hat{S}_{a(i)}|U_{\cup_{j=1}^{i-1}\mathcal{A}_j})\leq\rho\sum\limits_{i=1}^kI(V_{\mathcal{A}_i};Y_{a(i)}|V_{\cup_{j=1}^{i-1}\mathcal{A}_j})
\end{align*}
for any $\mathcal{A}_i\subseteq\{0,\cdots,L\}$ and $a(i)\in\{1,2\}$, $i=1,\cdots,k$.
\end{lemma}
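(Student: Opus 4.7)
The plan is to lift the claim to the $m$-blocklength realization that witnesses realizability, apply data processing inside that realization, and then single-letterize on the channel side via a Nair-style time-sharing identification. Take any realization whose induced $p_{\hat S_1^m, \hat S_2^m|S^m}$ single-letterizes to $p_{\hat S_1, \hat S_2|S}$, and augment it by drawing $(U_0(t), \ldots, U_L(t))\sim p_{U_0,\ldots,U_L|S}(\cdot|S(t))$ independently at each source position $t$. The resulting joint preserves the Markov chain $U_\ell^m\to S^m\to X^{\rho m}\to (Y_1^{\rho m}, Y_2^{\rho m})\to (\hat S_1^m, \hat S_2^m)$, with each $\hat S_i^m$ depending only on $Y_i^{\rho m}$.

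On the virtual side I would first pass from single-letter to multi-letter mutual information. Because the single-letter joint $p_{U_{\mathcal A}, U_{\mathcal B}, \hat S_a}$ is the position-average of the per-$t$ joints while the marginal $p(u_{\mathcal A}, u_{\mathcal B})$ is the same across $t$ (only $p^{(t)}_{\hat S_a|u_{\mathcal A}, u_{\mathcal B}}$ varies), convexity of $I(X;Y|Z)$ in $p_{Y|X,Z}$ for fixed $p_{X,Z}$ yields $m\cdot I(U_{\mathcal A}; \hat S_a|U_{\mathcal B})\le \sum_t I(U_{\mathcal A}(t); \hat S_a(t)|U_{\mathcal B}(t))$. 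Position-wise independence of the test-channel draws gives $U_{\mathcal A}(t)\perp (U_{\mathcal A}^{t-1}, U_{\mathcal B}^{\bar t}) \mid U_{\mathcal B}(t)$; this upgrades each per-$t$ term to $I(U_{\mathcal A}(t); \hat S_a(t)|U_{\mathcal B}^m, U_{\mathcal A}^{t-1})$, which chain-rules into $I(U_{\mathcal A}^m; \hat S_a^m|U_{\mathcal B}^m)$. Summing over $i$ and then applying data processing through the decoders gives $m\sum_i I(U_{\mathcal A_i}; \hat S_{a(i)}|U_{\mathcal B_{i-1}})\le \sum_i I(U_{\mathcal A_i}^m; Y_{a(i)}^{\rho m}|U_{\mathcal B_{i-1}}^m)$.

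For the channel side, let $Q$ be uniform on $\{1, \ldots, \rho m\}$ independent of everything, set $X = X(Q)$, $Y_i = Y_i(Q)$ with $p_X$ the marginal of $X(Q)$, and take
$V_\ell = (U_\ell^m,\, Y_1^{Q-1},\, Y_{2,Q+1}^{\rho m},\, Q)$, $\ell = 0, \ldots, L$.
Memorylessness of the physical channel makes $V - X - (Y_1, Y_2)$ Markov. Defining the functional $\phi(A) := \sum_q I(Y_1^{q-1}; Y_2(q)|A, Y_{2,q+1}^{\rho m})$ and invoking Csiszár's sum identity in the form $\sum_q I(Y_{2,q+1}^{\rho m}; Y_1(q)|A, Y_1^{q-1}) = \phi(A)$, a chain-rule expansion in the past $Y_1^{Q-1}$ direction when $a(i) = 1$ and a reverse-chain-rule expansion in the future $Y_{2,Q+1}^{\rho m}$ direction when $a(i) = 2$ both produce, for $i\ge 2$,
$\rho m \cdot I(V_{\mathcal A_i}; Y_{a(i)}|V_{\mathcal B_{i-1}}) = I(U_{\mathcal A_i}^m; Y_{a(i)}^{\rho m}|U_{\mathcal B_{i-1}}^m) + \phi(U_{\mathcal B_i}^m) - \phi(U_{\mathcal B_{i-1}}^m)$,
and for $i = 1$ the analogous expression plus extra non-negative boundary terms, so the residue is at least $\phi(U_{\mathcal B_1}^m)$. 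Summing over $i$ telescopes the $\phi$-differences down to $\phi(U_{\mathcal B_k}^m) \ge 0$, giving $\rho m \sum_i I(V_{\mathcal A_i}; Y_{a(i)}|V_{\mathcal B_{i-1}}) \ge \sum_i I(U_{\mathcal A_i}^m; Y_{a(i)}^{\rho m}|U_{\mathcal B_{i-1}}^m)$. Chaining with the virtual-side bound and dividing by $m$ delivers the lemma.

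The hard part is the channel-side telescoping: one must recognize that Csiszár's sum identity is exactly what aligns the $a(i) = 1$ residue (naturally expressed via past $Y_1$) with the $a(i) = 2$ residue (naturally expressed via future $Y_2$) into a single functional $\phi$ that telescopes across the nested sets $\mathcal B_i$ regardless of how the receivers $a(i) \in \{1, 2\}$ are interleaved.
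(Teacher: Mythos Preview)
Your proposal is correct and follows essentially the same route as the paper: the same i.i.d.\ augmentation on the source side, the same data-processing step, and the same auxiliary identification $V_\ell=(U_\ell^m,Y_1^{Q-1},Y_{2,Q+1}^{\rho m},Q)$ with Csisz\'ar's sum identity driving the channel-side single-letterization. The only organizational difference is that you package the channel-side argument as an explicit telescoping sum in the functional $\phi(A)=\sum_q I(Y_1^{q-1};Y_2(q)\mid A,Y_{2,q+1}^{\rho m})$, whereas the paper obtains the same bound by a backward induction on the starting index $l$; the induction step there is precisely your per-$i$ identity, so the two presentations are equivalent.
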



The following result, which gives a general necessary condition for the lossy source broadcast problem, is a simple consequence of Lemma \ref{lem:ordering}. Its proof can be found in Appendix \ref{app:TheoremI}.

\begin{theorem}\label{thm:ordering}
If distortion pair $(d_1,d_2)$ is achievable under distortion measures $w_1$ and $w_2$ subject to bandwidth expansion constraint $\kappa$, then there exists a virtual broadcast channel $p_{\hat{S}_1,\hat{S}_2|S}$ with $\mathbb{E}[w_i(S,\hat{S}_i)]\leq d_i$, $i=1,2$, and an input distribution $p_X$ such that, for any $p_{U_0,U_1,U_2,S,\hat{S}_1,\hat{S}_2}=p_{U_0,U_1,U_2|S}p_Sp_{\hat{S}_1,\hat{S}_2|S}$, one can find  $p_{V_0,V_1,V_2,X,Y_1,Y_2}=p_{V_0,V_1,V_2|X}p_Xp_{Y_1,Y_2|X}$ satisfying
\begin{align*}
&I(U_0;\hat{S}_i)\leq\kappa I(V_0;Y_i), \quad i=1,2, \\
&I(U_0,U_i;\hat{S}_i)\leq\kappa I(V_0,V_i;Y_i),\quad i=1,2, \\
&I(U_0;\hat{S}_1)+I(U_2;\hat{S}_2|U_0)\leq\kappa[I(V_0;Y_1)+I(V_2;Y_2|V_0)], \\
&I(U_0;\hat{S}_2)+I(U_1;\hat{S}_1|U_0)\leq\kappa[I(V_0;Y_2)+I(V_1;Y_1|V_0)], \\
&I(U_0,U_1;\hat{S}_1)+I(S;\hat{S}_2|U_0,U_1)\leq\kappa[I(V_0,V_1;Y_1)+I(X;Y_2|V_0,V_1)], \\
&I(U_0,U_2;\hat{S}_2)+I(S;\hat{S}_1|U_0,U_2)\leq\kappa[I(V_0,V_2;Y_2)+I(X;Y_1|V_0,V_2)], \\
&I(U_0;\hat{S}_1)+I(U_2;\hat{S}_2|U_0)+I(S;\hat{S}_1|U_0,U_2)\leq\kappa[I(V_0;Y_1)+I(V_2;Y_2|V_0)+I(X;Y_1|V_0,V_2)], \\
&I(U_0;\hat{S}_2)+I(U_1;\hat{S}_1|U_0)+I(S;\hat{S}_2|U_0,U_1)\leq\kappa[I(V_0;Y_2)+I(V_1;Y_1|V_0)+I(X;Y_2|V_0,V_1)].
\end{align*}
Here it suffices to consider $|\mathcal{U}_0|\leq|\mathcal{S}|$, $|\mathcal{V}_0|\leq|\mathcal{X}|+5$, $|\mathcal{U}_i|\leq |\mathcal{S}|$, and $|\mathcal{V}_i|\leq |\mathcal{X}|$, $i=1,2$.
\end{theorem}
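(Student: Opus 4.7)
The theorem will be derived as a direct consequence of Lemma~\ref{lem:ordering} once the right specializations of $(L,k,\{\mathcal{A}_i\},\{a(i)\})$ are identified and a limit is taken over an achieving sequence of codes. The plan is to first extract a limiting virtual broadcast channel: by Definition~\ref{def:systemA2}, achievability of $(d_1,d_2)$ with bandwidth constraint $\kappa$ furnishes a sequence of codes indexed by $n$ with bandwidth expansion ratios $\rho_n\le\kappa+\epsilon_n$, $\epsilon_n\downarrow 0$, whose single-letterized reconstruction distributions $p^{(n)}_{\hat{S}_1,\hat{S}_2|S}$ satisfy $\mathbb{E}[w_i(S,\hat{S}^{(n)}_i)]\le d_i+\epsilon_n$ and are, by construction, realizable through $p_{Y_1,Y_2|X}$ at expansion ratio $\rho_n$. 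Lemma~\ref{lem:ordering} then supplies, for each $n$, an input distribution $p^{(n)}_X$ with the chain-rule domination property. Finiteness of all alphabets lets me pass to a subsequence along which $p^{(n)}_{\hat{S}_1,\hat{S}_2|S}\to p_{\hat{S}_1,\hat{S}_2|S}$ and $p^{(n)}_X\to p_X$, with the distortion bounds $\mathbb{E}[w_i(S,\hat{S}_i)]\le d_i$ preserved by continuity of expectation.

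Next I would match each of the eight inequalities to a specific instantiation of Lemma~\ref{lem:ordering}. The first four, which contain no term of the form $I(S;\hat{S}_i|\cdots)$ or $I(X;Y_i|\cdots)$, use $L=2$ and the obvious choices of $(\mathcal{A}_i,a(i))$; for example, the third comes from $k=2$, $\mathcal{A}_1=\{0\}$, $a(1)=1$, $\mathcal{A}_2=\{2\}$, $a(2)=2$. The remaining four require a small device: take $L=3$ and declare $U_3=S$, so that $I(U_3;\hat{S}_i|\cdots)=I(S;\hat{S}_i|\cdots)$. Lemma~\ref{lem:ordering} then yields a companion variable $V_3$ with the induced Markov structure $(V_0,V_1,V_2,V_3)-X-(Y_1,Y_2)$, whence the standard data-processing bound $I(V_3;Y_i|V_0,V_j)\le I(X;Y_i|V_0,V_j)$ allows me to upgrade $V_3$ to $X$ on the right-hand side. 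For instance, the fifth inequality arises from $k=2$, $\mathcal{A}_1=\{0,1\}$, $a(1)=1$, $\mathcal{A}_2=\{3\}$, $a(2)=2$; the seventh from $k=3$, $\mathcal{A}_1=\{0\}$, $a(1)=1$, $\mathcal{A}_2=\{2\}$, $a(2)=2$, $\mathcal{A}_3=\{3\}$, $a(3)=1$; the sixth and eighth are the symmetric variants under the exchange $1\leftrightarrow 2$.

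For any fixed test distribution $p_{U_0,U_1,U_2|S}$, applying Lemma~\ref{lem:ordering} code by code produces $p^{(n)}_{V_0,V_1,V_2,V_3|X}$ satisfying the eight inequalities with factor $\rho_n$; a further compactness extraction along the $V$'s, together with continuity of mutual information and $\rho_n\to\kappa$, yields the claimed $\kappa$-scaled inequalities for the limiting $(p_{\hat{S}_1,\hat{S}_2|S},p_X)$, after which $V_3$ is marginalized out. The cardinality bounds on $|\mathcal{V}_0|,|\mathcal{V}_i|$ are inherited from the perturbation argument underlying~\eqref{eq:outerbound} (Nair~\cite{Nair11}), applied simultaneously to the eight inequalities; those on $|\mathcal{U}_0|,|\mathcal{U}_i|$ follow from an analogous Carath\'eodory-type reduction exploiting the fact that only finitely many functionals of $p_{U_0,U_1,U_2|S}$ actually enter the left-hand sides. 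The chief obstacle I anticipate is the double-limit organization: the code sequence must be fixed once and for all so as to identify $(p_{\hat{S}_1,\hat{S}_2|S},p_X)$, yet for each test distribution a separate subsequential limit must be taken to produce the $V$'s, and one must verify that the companion variable $V_3$ arising from the $U_3=S$ device is jointly compatible (via $V_3-X-Y_i$) with the already-fixed $p_X$. Once this bookkeeping is arranged cleanly, the remaining work is a routine manipulation of mutual-information chain rules.
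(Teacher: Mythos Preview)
Your proposal is sound and arrives at the theorem, but it differs from the paper's argument in how the $I(S;\hat S_i\mid\cdots)$ and $I(X;Y_i\mid\cdots)$ terms are produced. The paper stays with $L=2$ throughout: from Lemma~\ref{lem:ordering} it extracts ten inequalities (your six ``$S$-free'' ones plus four more of the shape $I(U_0,U_1;\hat S_1)+I(U_2;\hat S_2\mid U_0,U_1)\le\rho[I(V_0,V_1;Y_1)+I(V_2;Y_2\mid V_0,V_1)]$, etc.) in which the innermost slot still carries $U_1,U_2$ on the left and $V_1,V_2$ on the right rather than $S$ and $X$. It then invokes a region-equivalence observation (citing \cite[Remark~3.6]{NEG07}) to recast the whole collection as a single containment $\mathcal R(p_S,p_{\hat S_1,\hat S_2|S})\subseteq\rho\,\mathcal R(p_X,p_{Y_1,Y_2|X})$, where the defining constraints of the two ten-dimensional regions already carry $S$ and $X$ in the innermost slot; the theorem's statement is then read off from this inclusion. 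Your device of setting $L=3$, $U_3=S$, and bounding $I(V_3;Y_i\mid V_0,V_j)\le I(X;Y_i\mid V_0,V_j)$ via the Markov chain achieves the same substitution more transparently and avoids the region-equivalence step; the price is an extra auxiliary $V_3$ to carry through the cardinality reduction and limiting argument. More importantly, the paper's packaging as a region inclusion also dissolves the ``double-limit'' obstacle you flag: because the containment involves only $(p_{\hat S_1,\hat S_2|S},p_X)$ and not the $V$'s explicitly, a single subsequential limit on that pair suffices, and the $V$'s are recovered afterward from the limiting inclusion. In your route you would instead first reduce the $V$-alphabets at each $n$ (so that compactness applies) and then extract a further subsequence per test distribution---perfectly workable, but the bookkeeping you anticipate is genuinely heavier.
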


Remark: Gohari and Anantharam independently obtained a necessary condition for the lossy source broadcast channel \cite[Theorem 2]{GA13b}, which is, roughly speaking, a special case of Theorem \ref{thm:ordering} with $\kappa=1$, $S=(S_1,S_2)$, and $U_i=S_i$, $i=1,2$.

The following result is a direct consequence of Theorem \ref{thm:ordering}.
\begin{corollary}\label{cor:capacitycomp}
If distortion pair $(d_1,d_2)$ is achievable under distortion measures $w_1$ and $w_2$ subject to bandwidth expansion constraint $\kappa$, then there exists a virtual broadcast channel $p_{\hat{S}_1,\hat{S}_2|S}$ with $\mathbb{E}[w_i(S,\hat{S}_i)]\leq d_i$, $i=1,2$ and an input distribution $p_X$ such that
\begin{align*}
\mathcal{C}_{\textsf{out}}(p_S,p_{\hat{S}_1,\hat{S}_2|S})\subseteq\kappa\mathcal{C}_{\textsf{out}}(p_X, p_{Y_1,Y_2|X}).
\end{align*}
\end{corollary}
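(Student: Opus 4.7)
The plan is to directly compare, entry by entry, the constraints defining $\mathcal{C}_{\textsf{out}}(p_S, p_{\hat{S}_1, \hat{S}_2|S})$ against the eight inequalities supplied by Theorem \ref{thm:ordering}. I first fix the virtual broadcast channel $p_{\hat{S}_1, \hat{S}_2|S}$ and input distribution $p_X$ promised by that theorem. A point that is crucial for the argument to go through at all is that $p_X$ is chosen \emph{once}, ahead of any source-side test distribution, so the same $p_X$ serves every choice of $p_{U_0,U_1,U_2|S}$; this is exactly what is needed to exhibit a single $p_X$ for the inclusion $\mathcal{C}_{\textsf{out}}(p_S,p_{\hat{S}_1,\hat{S}_2|S}) \subseteq \kappa\, \mathcal{C}_{\textsf{out}}(p_X, p_{Y_1,Y_2|X})$.

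Given an arbitrary $(R_0,R_1,R_2) \in \mathcal{C}_{\textsf{out}}(p_S,p_{\hat{S}_1,\hat{S}_2|S})$, I pick a witness test distribution $p_{U_0,U_1,U_2|S}$ and apply Theorem \ref{thm:ordering} to \emph{that particular} test distribution, obtaining a companion $p_{V_0,V_1,V_2|X}$ satisfying all eight listed inequalities. I propose this $p_{V_0,V_1,V_2|X}$ as the witness on the physical-channel side for the scaled point $(R_0/\kappa, R_1/\kappa, R_2/\kappa)$. The verification then amounts to expanding each $\min\{I(V_0;Y_1), I(V_0;Y_2)\}$ appearing on the physical-channel side into its two branches, which produces ten branch-bounds from the five constraint templates of $\mathcal{C}_{\textsf{out}}(p_X,p_{Y_1,Y_2|X})$. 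These ten branches are matched one-to-one against the inequalities of Theorem \ref{thm:ordering}, bearing in mind that the first two inequalities are stated for $i=1,2$ and so supply four of the ten. As a representative case, the branch $R_0+R_1 \leq \kappa[I(V_0;Y_2) + I(V_1;Y_1|V_0)]$ follows from the source-side bound $R_0+R_1 \leq I(U_0;\hat{S}_2) + I(U_1;\hat{S}_1|U_0)$ together with the inequality $I(U_0;\hat{S}_2) + I(U_1;\hat{S}_1|U_0) \leq \kappa[I(V_0;Y_2) + I(V_1;Y_1|V_0)]$ of Theorem \ref{thm:ordering}; the remaining nine cases are handled by the same pattern, each tying a branch to its dedicated inequality.

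Because the matching is essentially bookkeeping, I do not expect any substantive obstacle. The only thing to monitor is the completeness of the pairing: the eight inequalities of Theorem \ref{thm:ordering} (effectively ten after accounting for the two that come in matched pairs) must cover every one of the ten branches, and this is exactly how those inequalities appear to have been designed. Cardinality qualifications play no role beyond the finiteness of the alphabets, since the regions involved are invariant to enlarging the auxiliary alphabets, and Theorem \ref{thm:ordering} applies to any source-side test distribution.
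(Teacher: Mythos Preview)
Your proposal is correct and is precisely the ``direct consequence'' the paper alludes to without spelling out: pick the $p_{\hat{S}_1,\hat{S}_2|S}$ and $p_X$ from Theorem~\ref{thm:ordering}, and for each point of $\mathcal{C}_{\textsf{out}}(p_S,p_{\hat{S}_1,\hat{S}_2|S})$ with witness $p_{U_0,U_1,U_2|S}$ use the companion $p_{V_0,V_1,V_2|X}$ to certify membership of the scaled point in $\mathcal{C}_{\textsf{out}}(p_X,p_{Y_1,Y_2|X})$. Your branch-by-branch matching of the ten expanded constraints against the ten inequalities of Theorem~\ref{thm:ordering} is exactly the intended bookkeeping, and your observations that $p_X$ is fixed independently of $p_{U_0,U_1,U_2|S}$ and that the cardinality bounds are immaterial are both on point.
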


\subsection{An Improved Outer Bound on the Source Admissible Region}\label{sec:lossless}

The following outer bound on the admissible source region was established by Gohari and Anantharam \cite[Corollary 2]{GA13b} (see also \cite[Theorem 2]{KC14}).



Let $S_0$ denote the common part between $S_1$ and $S_2$ in the sense of \cite{GK73,Witsenhausen75}.
\begin{theorem}\label{thm:lossless}
If $p_{(S_1,S_2)}$ is admissible for broadcast channel $p_{Y_1,Y_2|X}$ subject to bandwidth expansion constraint $\kappa$, then there exists $p_{V_0,V_1,V_2,X,Y_1,Y_2}=p_{V_0,V_1,V_2|X}p_Xp_{Y_1,Y_2|X}$ such that
\begin{align*}
&H(S_0)\leq\kappa\min\{I(V_0;Y_1),I(V_0;Y_2)\},\\
&H(S_i)\leq\kappa[\min\{I(V_0;Y_1),I(V_0;Y_2)\}+I(V_i;Y_i|V_0)],\quad i=1,2,\\
&H(S_1,S_2)\leq\kappa[\min\{I(V_0;Y_1),I(V_0;Y_2)\}+I(V_1;Y_1|V_0)+I(X;Y_2|V_0,V_1)],\\
&H(S_1,S_2)\leq\kappa[\min\{I(V_0;Y_1),I(V_0;Y_2)\}+I(V_2;Y_2|V_0)+I(X;Y_1|V_0,V_2)].
\end{align*}
\end{theorem}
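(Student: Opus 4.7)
The plan is to derive Theorem \ref{thm:lossless} as a direct specialization of Theorem \ref{thm:ordering} to the almost-lossless setting. By the Remark following Definition \ref{def:lossless}, admissibility of $p_{(S_1,S_2)}$ for $p_{Y_1,Y_2|X}$ subject to bandwidth expansion constraint $\kappa$ is equivalent to achievability of the distortion pair $(d_1,d_2)=(0,0)$ under the symbol-error distortion measures in (\ref{eq:measure}), taking $S=(S_1,S_2)$. Invoking Theorem \ref{thm:ordering} in this regime yields a virtual broadcast channel $p_{\hat{S}_1,\hat{S}_2|S}$ with $\mathbb{E}[w_i(S,\hat{S}_i)]=\mbox{Pr}(\hat{S}_i\neq S_i)\leq 0$, so that $\hat{S}_i=S_i$ almost surely, together with an input distribution $p_X$ such that for any admissible test triple $(U_0,U_1,U_2)$ one can find $(V_0,V_1,V_2)$ verifying all eight of the listed inequalities simultaneously.

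The key step is to specialize the test triple to $U_0=S_0$, $U_1=S_1$, $U_2=S_2$, where $S_0$ denotes the common part of $(S_1,S_2)$ in the sense of G\'acs--K\"orner--Witsenhausen. Since $S_0$ is a deterministic function of each $S_i$ and $\hat{S}_i=S_i$ almost surely, the relevant mutual informations collapse to pure entropies:
\begin{align*}
&I(U_0;\hat{S}_i)=H(S_0),\quad I(U_0,U_i;\hat{S}_i)=H(S_i),\\
&I(U_i;\hat{S}_i|U_0)=H(S_i|S_0),\quad I(S;\hat{S}_j|U_0,U_i)=H(S_j|S_i)\ \text{for}\ j\neq i.
\end{align*}
Consequently, the left-hand sides of the third and fourth displayed inequalities in Theorem \ref{thm:ordering} simplify to $H(S_2)$ and $H(S_1)$, respectively, while the left-hand sides of the fifth through eighth inequalities all collapse to $H(S_1,S_2)$ via the identity $H(S_0)+H(S_i|S_0)+H(S_j|S_i)=H(S_i)+H(S_j|S_i)=H(S_1,S_2)$, which relies crucially on $S_0$ being a function of $S_i$.

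The final step is to pair up the resulting eight inequalities so that the right-hand sides involve $\min\{I(V_0;Y_1),I(V_0;Y_2)\}$ in the fashion demanded by Theorem \ref{thm:lossless}. The first inequality (taken for both $i=1,2$) immediately supplies the bound on $H(S_0)$. The second (which delivers $H(S_i)\leq\kappa[I(V_0;Y_i)+I(V_i;Y_i|V_0)]$) paired with the third (for $i=2$) or the fourth (for $i=1$) supplies the bound on $H(S_i)$. The fifth paired with the eighth supplies one bound on $H(S_1,S_2)$; the sixth paired with the seventh supplies the other.

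I do not anticipate any significant obstacle beyond invoking Theorem \ref{thm:ordering} itself: once it is granted, the entire argument is a bookkeeping exercise enabled by the common-part structure. The only subtle point is that the exact equality $\hat{S}_i=S_i$ (rather than an approximate one) is needed, but this is already built into the statement of Theorem \ref{thm:ordering}, which incorporates the $\epsilon\to 0$ limit through the existence of a single-letter virtual channel with distortion at most $d_i$.
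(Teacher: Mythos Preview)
Your proposal is correct and follows essentially the same route as the paper: the paper obtains Theorem \ref{thm:losslessnew} from Theorem \ref{thm:ordering} by taking $U_i=S_i$, $i=1,2$, in the lossless regime, and then notes that Theorem \ref{thm:lossless} follows from Theorem \ref{thm:losslessnew} by choosing $U=S_0$; you have simply merged these two specialization steps into one by setting $(U_0,U_1,U_2)=(S_0,S_1,S_2)$ directly in Theorem \ref{thm:ordering}. The pairing of inequalities you describe is exactly what produces the $\min\{I(V_0;Y_1),I(V_0;Y_2)\}$ terms, and your observation that the limiting argument is already absorbed into Theorem \ref{thm:ordering} is accurate.
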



It is easy to observe that the inequalities in the statement of Theorem \ref{thm:lossless} closely resemble those in the definition of $\mathcal{C}_{\textsf{out}}(p_{Y_1,Y_2|X})$. In fact, one can readily establish Theorem 2 by interpreting $S_0$ as the common message and $S_i$ as the message (including both the private message $M_i$ and the common message $M_0$) intended for receiver $i$, $i=1,2$, and then following the proof of (\ref{eq:outerbound}). However, this approach is not completely satisfactory. Note that $M_0$, $M_1$, and $M_2$ are assumed to be independent. If the correspondence between sources and messages is exact, then $S_1\leftrightarrow S_0\leftrightarrow S_2$ must form a Markov chain. That is to say, the source-message correspondence does not fully capture the dependence structure between $S_1$ and $S_2$.

We shall show that one can obtain a tighter outer bound on the admissible source region by specializing Theorem \ref{thm:ordering} to the case of broadcasting correlated sources. Note that, if $\sum_{s,\hat{s}_i}p_S(s)p_{\hat{S}_i|S}(\hat{s}_1|s)w_i(s,\hat{s}_i)=0$ for $w_i$ given by (\ref{eq:measure}), $i=1,2$, then we must have\footnote{More precisely, we have $p_{\hat{S}_1,\hat{S}_2|S}=p_{S_1,S_2|(S_1,S_2)}$ when  the input alphabet restricted to $\{s\in\mathcal{S}: p_{S}(s)>0\}$.} $p_{\hat{S}_1,\hat{S}_2|S}=p_{S_1,S_2|(S_1,S_2)}$, which is a deterministic broadcast channel; moreover, in this case, there is no loss of optimality in choosing $U_i=S_i$, $i=1,2$. As a consequence, we obtain the following outer bound on the admissible source region.

\begin{theorem}\label{thm:losslessnew}
If $p_{(S_1,S_2)}$ is admissible for broadcast channel $p_{Y_1,Y_2|X}$ subject to bandwidth expansion constraint $\kappa$, then there exists an input distribution $p_X$ such that,
for any $p_{U,(S_1,S_2)}=p_{U|(S_1,S_2)}p_{(S_1,S_2)}$, one can find  $p_{V_0,V_1,V_2,X,Y_1,Y_2}=p_{V_0,V_1,V_2|X}p_Xp_{Y_1,Y_2|X}$ satisfying
\begin{align*}
&I(U;S_i)\leq\kappa I(V_0;Y_i), \quad i=1,2,\\
&H(S_i)\leq\kappa I(V_0,V_i;Y_i),\quad i=1,2,\\
&I(U;S_1)+H(S_2|U)\leq\kappa[I(V_0;Y_1)+I(V_2;Y_2|V_0)],\\
&I(U;S_2)+H(S_1|U)\leq\kappa[I(V_0;Y_2)+I(V_1;Y_1|V_0)],\\
&I(U;S_1)+H(S_1,S_2|U)\leq\kappa[I(V_0,V_1;Y_1)+I(X;Y_2|V_0,V_1)],\\
&I(U;S_2)+H(S_1,S_2|U)\leq\kappa[I(V_0,V_2;Y_2)+I(X;Y_1|V_0,V_2)],\\
&I(U;S_1)+H(S_1,S_2|U)\leq\kappa[I(V_0;Y_1)+I(V_2;Y_2|V_0)+I(X;Y_1|V_0,V_2)],\\
&I(U;S_2)+H(S_1,S_2|U)\leq\kappa[I(V_0;Y_2)+I(V_1;Y_1|V_0)+I(X;Y_2|V_0,V_1)].
\end{align*}
Here it suffices to consider $|\mathcal{U}|\leq|\mathcal{S}_1|\times|\mathcal{S}_2|$, $|\mathcal{V}_0|\leq|\mathcal{X}|+5$, and $|\mathcal{V}_i|\leq |\mathcal{X}|$, $i=1,2$.
\end{theorem}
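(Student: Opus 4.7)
The plan is to deduce Theorem \ref{thm:losslessnew} as a direct specialization of Theorem \ref{thm:ordering}. By the remark following Definition \ref{def:lossless}, if $p_{(S_1,S_2)}$ is admissible for $p_{Y_1,Y_2|X}$ subject to bandwidth expansion constraint $\kappa$, then, setting $S=(S_1,S_2)$ and taking $w_1,w_2$ to be the Hamming-type distortion measures in (\ref{eq:measure}), the distortion pair $(0,0)$ is achievable under the same bandwidth constraint. Applying Theorem \ref{thm:ordering} yields a virtual broadcast channel $p_{\hat{S}_1,\hat{S}_2|S}$ with $\mathbb{E}[w_i(S,\hat{S}_i)]=0$ for $i=1,2$, together with an input distribution $p_X$, such that the eight inequalities of Theorem \ref{thm:ordering} hold for every test distribution $p_{U_0,U_1,U_2|S}$.

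The crucial observation is that $\mathbb{E}[w_i(S,\hat{S}_i)]=0$ forces $p_{\hat{S}_i|S}(\hat{s}_i|s)=0$ whenever $\hat{s}_i\neq s_i$ (on the support of $p_S$), so the virtual broadcast channel collapses to the deterministic map $(s_1,s_2)\mapsto(s_1,s_2)$; in particular $\hat{S}_i=S_i$ with probability one. With this identification, I would then choose $U_0=U$ for an arbitrary auxiliary satisfying $p_{U,(S_1,S_2)}=p_{U|(S_1,S_2)}p_{(S_1,S_2)}$, together with the deterministic choices $U_1=S_1$ and $U_2=S_2$ (both well-defined since $S_i$ is a function of $S$). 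Substituting these into the eight inequalities of Theorem \ref{thm:ordering} and simplifying via the identities $I(U,S_i;S_i)=H(S_i)$, $I(S;S_j|U,S_i)=H(S_j|U,S_i)$, and $H(S_1,S_2|U)=H(S_i|U)+H(S_j|U,S_i)$ for $\{i,j\}=\{1,2\}$, each of the eight inequalities collapses to the corresponding inequality in the statement of Theorem \ref{thm:losslessnew}.

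The cardinality bounds transfer directly: $|\mathcal{U}|\leq|\mathcal{S}|=|\mathcal{S}_1|\times|\mathcal{S}_2|$ from the bound $|\mathcal{U}_0|\leq|\mathcal{S}|$ in Theorem \ref{thm:ordering}, while $|\mathcal{V}_0|\leq|\mathcal{X}|+5$ and $|\mathcal{V}_i|\leq|\mathcal{X}|$ are inherited unchanged (no further convexification is needed for $U_1,U_2$, which have been pinned to $S_1,S_2$). No substantial obstacle is anticipated: the argument is essentially a bookkeeping exercise once Theorem \ref{thm:ordering} is invoked. The only step worth a moment of thought is to confirm that pinning $U_1=S_1$ and $U_2=S_2$ does not violate any constraint imposed in Theorem \ref{thm:ordering} and does not weaken the resulting outer bound, but both are automatic because Theorem \ref{thm:ordering} is already quantified over every admissible joint distribution $p_{U_0,U_1,U_2|S}$.
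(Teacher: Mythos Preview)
Your proposal is correct and follows essentially the same route as the paper: specialize Theorem \ref{thm:ordering} to $S=(S_1,S_2)$ with the Hamming-type distortions (\ref{eq:measure}) at $d_1=d_2=0$, observe that the resulting virtual channel is the deterministic map $p_{S_1,S_2|(S_1,S_2)}$, and then set $U_0=U$, $U_1=S_1$, $U_2=S_2$ before simplifying. The cardinality bounds and the algebraic simplifications you list match the paper's derivation.
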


It is easy to verify that
\begin{align}
\mathcal{C}_{\textsf{in}}(p_{(S_1,S_2)},p_{S_1,S_2|(S_1,S_2)})=\mathcal{C}_{\textsf{out}}(p_{(S_1,S_2)},p_{S_1,S_2|(S_1,S_2)})=\mathcal{C}(p_{(S_1,S_2)}),\label{eq:inview}
\end{align}
where $\mathcal{C}(p_{(S_1,S_2)})$ denotes the set of $(R_0,R_1,R_2)\in\mathbb{R}^3_+$ satisfying
\begin{align*}
&R_0\leq\min\{I(U;S_1),I(U;S_2)\},\\
&R_0+R_i\leq\min\{I(U;S_1),I(U;S_2)\}+H(S_i|U),\quad i=1,2,\\
&R_0+R_1+R_2\leq\min\{I(U;S_1),I(U;S_2)\}+H(S_1,S_2|U)
\end{align*}
for some $p_{U,(S_1,S_2)}=p_{U|(S_1,S_2)}p_{(S_1,S_2)}$ with $|\mathcal{U}|\leq|\mathcal{S}_1|\times|\mathcal{S}_2|+2$. One can deduce the following result from Theorem \ref{thm:losslessnew} (or Corollary \ref{cor:capacitycomp}).

\begin{corollary}\label{cor:losslesscor}
If $p_{(S_1,S_2)}$ is admissible for broadcast channel $p_{Y_1,Y_2|X}$ subject to bandwidth expansion constraint $\kappa$, then there exists an input distribution $p_X$ such that
\begin{align}
\mathcal{C}(p_{(S_1,S_2)})\subseteq\kappa\mathcal{C}_{\textsf{out}}(p_X, p_{Y_1,Y_2|X}).\label{eq:implyy1}
\end{align}
\end{corollary}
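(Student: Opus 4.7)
The plan is to reduce the corollary to a mechanical combination of the eight single-letter inequalities supplied by Theorem \ref{thm:losslessnew}. Fix $(R_0,R_1,R_2)\in\mathcal{C}(p_{(S_1,S_2)})$ and let $U$ be an auxiliary witnessing membership in this region. Theorem \ref{thm:losslessnew} provides an input distribution $p_X$ (depending only on the admissibility of $p_{(S_1,S_2)}$) and, for this particular $U$, auxiliaries $V_0,V_1,V_2$ for which all eight of its inequalities hold. It then suffices to verify, term by term, the four families of constraints in the definition of $\kappa\,\mathcal{C}_{\textsf{out}}(p_X,p_{Y_1,Y_2|X})$ using these same $V_0,V_1,V_2$.

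The $R_0$ bound follows immediately from the first pair of Theorem \ref{thm:losslessnew} inequalities $I(U;S_i)\le\kappa I(V_0;Y_i)$, $i=1,2$. For every other bound the key step is the trivial identity
\begin{align*}
\min\{a,b\}+c=\min\{a+c,\,b+c\},
\end{align*}
which lets one distribute $\min\{I(U;S_1),I(U;S_2)\}$ over an added term and then bound each branch separately via a different inequality of Theorem \ref{thm:losslessnew}. For instance, $R_0+R_1\le\min\{I(U;S_1)+H(S_1|U),\,I(U;S_2)+H(S_1|U)\}$; the first branch equals $H(S_1)\le\kappa[I(V_0;Y_1)+I(V_1;Y_1|V_0)]$ by the $H(S_i)$-bound of Theorem \ref{thm:losslessnew}, while the second is bounded by $\kappa[I(V_0;Y_2)+I(V_1;Y_1|V_0)]$ by the $I(U;S_2)+H(S_1|U)$-bound. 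Taking the min and factoring out the common $\kappa I(V_1;Y_1|V_0)$ yields precisely $\kappa[\min\{I(V_0;Y_1),I(V_0;Y_2)\}+I(V_1;Y_1|V_0)]$. The $R_0+R_2$ constraint is handled by the symmetric pair.

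The two sum-rate constraints follow by the same pairing strategy applied to the last four inequalities of Theorem \ref{thm:losslessnew}: the bound involving $I(V_1;Y_1|V_0)+I(X;Y_2|V_0,V_1)$ is produced by pairing the inequality containing $I(V_0,V_1;Y_1)$ against the one containing $I(V_0;Y_2)+I(V_1;Y_1|V_0)+I(X;Y_2|V_0,V_1)$, and the other bound is obtained by swapping indices. I do not expect a genuine obstacle; the task is essentially bookkeeping, and the only potential snag is matching each of the eight Theorem \ref{thm:losslessnew} inequalities to its correct use. The mild cardinality discrepancy between the $|\mathcal{U}|\le|\mathcal{S}_1|\cdot|\mathcal{S}_2|$ appearing in Theorem \ref{thm:losslessnew} and the $|\mathcal{U}|\le|\mathcal{S}_1|\cdot|\mathcal{S}_2|+2$ appearing in the definition of $\mathcal{C}(p_{(S_1,S_2)})$ is immaterial, since the theorem's inequalities hold for any $U$; the cardinality bound is only invoked to obtain a computable form.
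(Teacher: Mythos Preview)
Your proposal is correct and follows one of the two routes the paper indicates (``One can deduce the following result from Theorem \ref{thm:losslessnew} (or Corollary \ref{cor:capacitycomp})''). The paper gives no detailed proof of this corollary; your pairing of each of the eight inequalities from Theorem \ref{thm:losslessnew} with the corresponding branch of $\min\{I(V_0;Y_1),I(V_0;Y_2)\}$ is exactly the intended bookkeeping, and your observation that $p_X$ is fixed independently of $U$ (so a single $p_X$ works for every point of $\mathcal{C}(p_{(S_1,S_2)})$) is the only point that needs care. The alternative route via Corollary \ref{cor:capacitycomp} together with the identity $\mathcal{C}_{\textsf{out}}(p_{(S_1,S_2)},p_{S_1,S_2|(S_1,S_2)})=\mathcal{C}(p_{(S_1,S_2)})$ in (\ref{eq:inview}) is slightly more direct, but your argument is equivalent.
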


The necessary condition in Theorem \ref{thm:lossless} can be written compactly as
\begin{align}
(H(S_0),H(S_1|S_0),H(S_2|S_0))\in\kappa\mathcal{C}_{\textsf{out}}(p_{Y_1,Y_2|X})\label{eq:degenerate}
\end{align}
when $S_1\leftrightarrow S_0\leftrightarrow S_2$ form a Markov chain. The following result shows that the same simplification is possible for Theorem \ref{thm:losslessnew} and, as a consequence, these two theorems are equivalent in this special case.

\begin{theorem}\label{thm:degenerate}
The necessary condition in Theorem \ref{thm:losslessnew} is equivalent to (\ref{eq:degenerate}) when $S_1\leftrightarrow S_0\leftrightarrow S_2$ form a Markov chain.
\end{theorem}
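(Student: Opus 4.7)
The strategy is to prove the two directions separately. The forward direction (Theorem~\ref{thm:losslessnew} $\Rightarrow$ (\ref{eq:degenerate})) is a routine specialization; the reverse direction is the substantive content and is where the Markov hypothesis is actually used.

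For the forward direction I will specialize $U = S_0$ in Theorem~\ref{thm:losslessnew}. Since $S_0$ is a deterministic function of each $S_i$, we have $I(S_0; S_i) = H(S_0)$ and $H(S_i|S_0) = H(S_i) - H(S_0)$; the Markov hypothesis yields $H(S_1, S_2 | S_0) = H(S_1|S_0) + H(S_2|S_0)$, so that $H(S_0) + H(S_1, S_2|S_0) = H(S_1, S_2)$. The ten inequalities of Theorem~\ref{thm:losslessnew} then pair naturally: (1) and (2) combine to give $H(S_0) \leq \kappa \min\{I(V_0;Y_1), I(V_0;Y_2)\}$; (3) with (6) yields the $H(S_1)$ bound of (\ref{eq:degenerate}), (4) with (5) yields the $H(S_2)$ bound, and the pairs (7) with (10) and (8) with (9) yield the two $H(S_1, S_2)$ bounds. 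In each pair the two one-sided ``$I(V_0; Y_j)$'' terms combine to produce the required minimum on the right-hand side.

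For the reverse direction, fix $(p_X, V_0, V_1, V_2)$ satisfying (\ref{eq:degenerate}); for each $p_{U|(S_1, S_2)}$ I will construct auxiliaries $(V_0', V_1', V_2')$ with the same $p_X$ verifying Theorem~\ref{thm:losslessnew}. The construction will proceed by cases on how much information $U$ carries about $S_1$ or $S_2$ beyond $S_0$. In the baseline case (where $U$ reveals little beyond $S_0$), keep $V_0' = V_0$, $V_1' = V_1$, $V_2' = V_2$; when $U$ carries substantial extra information about $S_1$, enlarge to $V_0' = (V_0, V_1)$, $V_1' = V_1$, $V_2' = X$, so that $I(X; Y_2|V_0', V_1') = I(X; Y_2|V_0, V_1)$ aligns with the bound derived from the first $H(S_1, S_2)$ inequality of (\ref{eq:degenerate}); take the symmetric choice when the extra information is about $S_2$. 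Each of the ten inequalities is then verified by decomposing $I(U; S_i) = I(U; S_0) + I(U; S_i|S_0)$, applying chain-rule and data-processing steps on the right-hand sides, and invoking the four inequalities of (\ref{eq:degenerate}).

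The main obstacle will be completing the reverse direction: a single choice of $(V_0', V_1', V_2')$ must satisfy all ten inequalities simultaneously for each $U$. Inequalities (1)--(4) favor a large $V_0'$ so that $I(V_0'; Y_i)$ dominates $I(U; S_i)$ (which can be as large as $H(S_i)$), whereas (7)--(10) favor keeping $I(X; Y_j|V_0', V_i')$ sizable and thus pull in the opposite direction. The Markov hypothesis is what reconciles this tension, through the identity $H(S_2|S_1, U) \leq H(S_2|S_1) = H(S_2|S_0)$, which allows the joint-entropy LHSs $I(U; S_i) + H(S_1, S_2|U)$ to be split into single-letter quantities bounded directly by (\ref{eq:degenerate}); without the Markov hypothesis the construction would fail, consistent with Theorem~\ref{thm:losslessnew} being strictly tighter than Theorem~\ref{thm:lossless} in general.
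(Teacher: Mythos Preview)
Your forward direction is correct and matches the paper's (the paper dismisses it as ``straightforward'' and does exactly the $U=S_0$ specialization you describe).

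Your reverse direction, however, has a genuine gap, and it is precisely at the point you yourself flag as the ``main obstacle.'' You propose to handle an \emph{arbitrary} $p_{U|(S_1,S_2)}$ by choosing among three fixed auxiliaries---$V_0'=V_0$, or $V_0'=(V_0,V_1)$ with $V_2'=X$, or the symmetric choice---according to a loosely specified threshold on how much information $U$ carries beyond $S_0$. This cannot close as stated. First, the case list is incomplete: there is no provision for $U$ carrying extra information about \emph{both} $S_1$ and $S_2$ simultaneously (take, e.g., $U=(S_1,S_2)$, or more delicately any $U$ with $I(U;S_1)$ and $I(U;S_2)$ both slightly exceeding $H(S_0)$). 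Second, even within the listed cases, the boundaries are not well-defined, and for a generic $U$ near such a boundary none of your three fixed choices will satisfy all ten inequalities at once: the first two inequalities of Theorem~\ref{thm:losslessnew} require $I(U;S_i)\le\kappa I(V_0';Y_i)$, and from (\ref{eq:degenerate}) you only control $I(V_0;Y_i)$ down to $H(S_0)/\kappa$, not to $I(U;S_i)/\kappa$. The Markov identity you invoke, $H(S_2|S_1,U)\le H(S_2|S_0)$, helps with inequalities (7)--(10) but does nothing for (1)--(2).

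The paper resolves this tension by a completely different mechanism: rather than handling every $U$ directly, it rewrites the condition of Theorem~\ref{thm:losslessnew} as a containment of convex regions and then maximizes an arbitrary linear functional $\sum_i\lambda_ir_i$ over $\mathcal{R}(p_{(S_1,S_2)})$. Using the Markov chain (via $H(S_1,S_2|S_0,U)\le H(S_1|S_0,U)+H(S_2|S_0,U)$, with equality exactly when $U$ is a function of $S_0$, $S_1$, $S_2$, or $(S_1,S_2)$), it shows that the maximum is always attained at one of the four specific choices $U\in\{S_0,S_1,S_2,(S_1,S_2)\}$. Only \emph{then} does one verify the ten inequalities for each of these four extremal $U$, with $(V_0,V_1,V_2)$ chosen respectively as $(V_0^*,V_1^*,V_2^*)$, $((V_0^*,V_1^*),(V_0^*,V_1^*),X)$, the symmetric choice, and $(X,\cdot,\cdot)$. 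Your three proposed auxiliaries are in fact three of these four, but you are missing both the fourth case and---more importantly---the supporting-hyperplane reduction that explains why it suffices to check only these discrete choices rather than every $U$.
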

\begin{IEEEproof}
See Appendix \ref{app:degenerate}.
\end{IEEEproof}

It is clear that one can recover Theorem \ref{thm:lossless} from Theorem \ref{thm:losslessnew} by choosing $U=S_0$. Therefore, the new outer bound is at least as tight as the Gohari-Anantharam outer bound. We shall give an example to show that the improvement can be strict. Our example is motivated by the observation that the characterization of the capacity region of the deterministic broadcast channel with a common message involves an auxiliary random variable which is not necessarily a function of the channel input \cite{Han81} as well as the observation that $H(S_0)$ is not a continuous function of $p_{(S_1,S_2)}$.

Now consider the example where the physical broadcast channel is the Blackwell channel $p^B_{Y_1,Y_2|X}$, where
\begin{align*}
&p^B_{Y_1,Y_2|X}(y_1,y_2|x)\\
&=\left\{
                           \begin{array}{ll}
                             1, & (x,y_1,y_2)=(0,0,0), (1,1,1), (2,0,1)\\
                             0, & \mbox{otherwise}
                           \end{array}
                         \right.
\end{align*}
with $x\in\{0,1,2\}$ and $y_i\in\{0,1\}$, $i=1,2$; moreover, let $S_1=(\tilde{S}_0(\alpha),\tilde{S}_1)$ and $S_2=(\bar{S}_0(\alpha),\bar{S}_2)$ with $\alpha\in[H^{-1}_b(\frac{1}{2}\log_23-\frac{2}{3}),H^{-1}_b(\log_23-\frac{4}{3}))$, where $\tilde{S}_0(\alpha)$, $\bar{S}_0(\alpha)$, $\tilde{S}_1$,  and $\bar{S}_2$ are binary random variables defined over $\{0,1\}$, and $H^{-1}_b(\cdot):[0,1]\rightarrow[0,\frac{1}{2}]$ is the inverse of the binary entropy function $H_b(\cdot)$.
Specifically, we assume that $(\tilde{S}_0(\alpha), \bar{S}_0(\alpha))$, $\tilde{S}_1$, and $\bar{S}_2$ are mutually independent with
\begin{align*}
&p_{\tilde{S}_1}(0)=p_{\bar{S}_2}(0)=H^{-1}_b(\frac{2}{3}),\\
&p_{\tilde{S}_0(\alpha)}(0)=p_{\bar{S}_0(\alpha)}(0)=\alpha,\\
&p_{\bar{S}_0(\alpha)|\tilde{S}_0(\alpha)}(1|0)=p_{\tilde{S}_0(\alpha)|\bar{S}_0(\alpha)}(1|0)=\beta(\alpha),\\
&p_{\bar{S}_0(\alpha)|\tilde{S}_0(\alpha)}(0|1)=p_{\tilde{S}_0(\alpha)|\bar{S}_0(\alpha)}(0|1)=1-\frac{\alpha\beta(\alpha)}{1-\alpha},
\end{align*}
where $\beta(\alpha)$ is the unique solution in $(0,1-\alpha]$ of the following equation
\begin{align*}
H(\tilde{S}_0(\alpha), \bar{S}_0(\alpha))=\log_23-\frac{4}{3}.
\end{align*}
Note that such $S_1$ and $S_2$ have no non-trivial common part, i.e., $H(S_0)=0$.
By setting $p_X(0)=p_X(1)=p_X(2)=\frac{1}{3}$, $V_0=0$, and $V_i=Y_i$, $i=1,2$, one can readily verify that $p_{(S_1,S_2)}$ satisfies the necessary condition in Theorem \ref{thm:lossless} with $\kappa=1$ for any $\alpha\in[H^{-1}_b(\frac{1}{2}\log_23-\frac{2}{3}),H^{-1}_b(\log_23-\frac{4}{3}))$. However, we shall show that this is not the case for Theorem \ref{thm:losslessnew}. It is easy to see that, if the afore-described $p_{(S_1,S_2)}$ is admissible for the Blackwell channel $p^B_{Y_1,Y_2|X}$ subject to bandwidth expansion constraint $\kappa$, then, by Corollary \ref{cor:losslesscor} as well as the fact that $\mathcal{C}_{\textsf{out}}(p^B_{Y_1,Y_2|X})=\mathcal{C}(p^B_{Y_1,Y_2|X})$, we must have
\begin{align}
\mathcal{C}(p_{(S_1,S_2)})\subseteq\kappa\mathcal{C}(p^B_{Y_1,Y_2|X}).\label{eq:violate}
\end{align}
By choosing $U=(\tilde{S}_0(\alpha), \bar{S}_0(\alpha))$, one can readily verify that $(H_b(\alpha),\frac{2}{3},\frac{2}{3})$ is contained in $\mathcal{C}(p_{(S_1,S_2)})$ for any $\alpha\in[H^{-1}_b(\frac{1}{2}\log_23-\frac{2}{3}),H^{-1}_b(\log_23-\frac{4}{3}))$. On the other hand, it follows from \cite[Lemma 1]{KN09} that $(\log_23-\frac{4}{3},\frac{2}{3},\frac{2}{3})$ is not contained in $\mathcal{C}(p^B_{Y_1,Y_2|X})$. Note that $(H_b(\alpha),\frac{2}{3},\frac{2}{3})$ converges to $(\log_23-\frac{4}{3},\frac{2}{3},\frac{2}{3})$ as $\alpha\rightarrow H^{-1}_b(\log_23-\frac{4}{3})$. Since $\mathcal{C}(p_{Y_1,Y_2|X})$ is closed, it follows that $p_{(S_1,S_2)}$ violates (\ref{eq:violate}) with $\kappa=1$ (and consequently the necessary condition in Theorem \ref{thm:losslessnew} with $\kappa=1$) when $\alpha$ is sufficiently close to $H^{-1}_b(\log_23-\frac{4}{3})$.

This example indicates that choosing $U=S_0$ in Theorem \ref{thm:losslessnew} is not always optimal.
In this sense, the common part between $S_1$ and $S_2$ does not play a fundamental role in the new outer bound; see \cite{MK09} for a related observation.

\section{Approach II: Comparison of Capacity Regions}\label{sec:II}


\subsection{The Lossy Source Broadcast Problem}

In a certain sense, the notion of dominance in Section \ref{sec:I} hinges upon the converse results for broadcast channels.  In this section we shall develop a different notion of dominance that is mainly based on the achievability results for broadcast channels. This notion is captured by the following lemma, which shows that every realizable virtual broadcast channel $p_{\hat{S}_1,\hat{S}_2|S}$  is dominated by the physical broadcast channel $p_{Y_1,Y_2|X}$  in the sense that Marton's inner bound of $p_{\hat{S}_1,\hat{S}_2|S}$ with input distribution $p_S$ is contained in the capacity region of $p_{Y_1,Y_2|X}$. The proof of this lemma is relegated to Appendix \ref{app:capacitycomparison}.

\begin{lemma}\label{lem:capacitycomparison}
If a virtual broadcast channel $p_{\hat{S}_1,\hat{S}_2|S}$ is realizable through the physical broadcast channel $p_{Y_1,Y_2|X}$ with bandwidth expansion ratio $\rho$, then
\begin{align*}
\mathcal{C}_{\textsf{in}}(p_S, p_{\hat{S}_1,\hat{S}_2|S})\subseteq\rho\mathcal{C}(p_{Y_1,Y_2|X}).
\end{align*}
\end{lemma}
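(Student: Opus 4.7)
The plan is to simulate a Marton code designed for the virtual broadcast channel by concatenating it with the realization scheme that implements the virtual channel through the physical one.

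Concretely, suppose $(R_0,R_1,R_2)\in\mathcal{C}_{\textsf{in}}(p_S,p_{\hat{S}_1,\hat{S}_2|S})$ and let $p_{U_0,U_1,U_2|S}$ be a test channel witnessing this. By the definition of realizability, there exist a positive integer $m$ and conditional distributions $p_{X^{\rho m}|S^m}$ and $p_{\hat{S}_i^m|Y_i^{\rho m}}$ whose composition with the $\rho m$-fold product of the physical channel yields a distribution $W:=p_{\hat{S}_1^m,\hat{S}_2^m|S^m}$ whose single-letterization equals $p_{\hat{S}_1,\hat{S}_2|S}$. I view $W$ as a broadcast super-channel with input alphabet $\mathcal{S}^m$ and output alphabets $\hat{\mathcal{S}}_1^m,\hat{\mathcal{S}}_2^m$; by construction, one use of $W$ can be simulated by $\rho m$ uses of the physical channel.

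The core step is to show $m(R_0,R_1,R_2)\in\mathcal{C}_{\textsf{in}}(p_S^m,W)$. I take super-auxiliaries $\tilde{U}_j:=(U_j(1),\ldots,U_j(m))$ with the triples $(U_0(t),U_1(t),U_2(t))$ drawn conditionally independently across $t$ from $p_{U_0,U_1,U_2|S}(\cdot|S(t))$. Using the chain rule together with the elementary fact $H(U_j(t)|U_j^{t-1},\hat{S}_i^m)\leq H(U_j(t)|\hat{S}_i(t))$ (less conditioning), one obtains
\[
I(\tilde{U}_0;\hat{S}_i^m)\geq\sum_{t=1}^m I^{(t)}(U_0;\hat{S}_i),\qquad I(\tilde{U}_i;\hat{S}_i^m|\tilde{U}_0)\geq\sum_{t=1}^m I^{(t)}(U_i;\hat{S}_i|U_0),
\]
where $I^{(t)}$ is computed under the position-$t$ marginal $p_{\hat{S}_1(t),\hat{S}_2(t)|S(t)}$ of $W$. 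Convexity of mutual information in the channel, combined with the single-letterization identity $\frac{1}{m}\sum_t p_{\hat{S}_1(t),\hat{S}_2(t)|S(t)}=p_{\hat{S}_1,\hat{S}_2|S}$, then yields $\sum_t I^{(t)}\geq m\,I^{\mathrm{avg}}$, where $I^{\mathrm{avg}}$ uses the single-letterized virtual channel. Finally, $I(\tilde{U}_1;\tilde{U}_2|\tilde{U}_0)=m\,I(U_1;U_2|U_0)$ holds with equality since the auxiliary coordinates are i.i.d. Substituting these estimates into the Marton-style inequalities defining $\mathcal{C}_{\textsf{in}}(p_S^m,W)$ shows that $m(R_0,R_1,R_2)$ lies inside.

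Marton's achievability theorem applied to $W$ then produces, for each $\epsilon>0$, a length-$N$ code for $W$ at per-super-symbol rates $(mR_0-\epsilon,mR_1-\epsilon,mR_2-\epsilon)$ with vanishing error as $N\to\infty$. Replacing each super-symbol use by the realization scheme $(f^{(m,\rho m)},g_1^{(\rho m,m)},g_2^{(\rho m,m)})$ gives a length-$\rho m N$ code for the physical broadcast channel at per-physical-channel-use rates $(R_i-\epsilon/m)/\rho$. Letting $\epsilon\to 0$ and invoking closedness of $\mathcal{C}(p_{Y_1,Y_2|X})$ yields $(R_0/\rho,R_1/\rho,R_2/\rho)\in\mathcal{C}(p_{Y_1,Y_2|X})$, i.e., $(R_0,R_1,R_2)\in\rho\,\mathcal{C}(p_{Y_1,Y_2|X})$.

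The main obstacle is the sum-rate inequality: the $-I(U_1;U_2|U_0)$ term enters with the wrong sign for a convexity-based lower bound. What saves us is that this term depends only on the i.i.d.-constructed auxiliary triples and not on the super-channel, so it matches exactly, rather than merely lower-bounds, $m$ times its single-letter counterpart; combined with the one-sided bounds on the remaining terms this delivers the full scaled Marton sum-rate inequality.
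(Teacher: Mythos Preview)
Your argument is correct and follows essentially the same route as the paper's: build i.i.d.\ super-auxiliaries over the block-$m$ virtual channel, use chain-rule/conditioning bounds to show the Marton information quantities scale by at least $m$ (with the $I(U_1;U_2|U_0)$ term matching exactly), and then pass to the physical channel via the degradedness/simulation step. The only cosmetic difference is that you invoke convexity of mutual information in the channel where the paper introduces an explicit time-sharing index $T$; these are two phrasings of the same inequality.
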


The following necessary condition for the lossy source broadcast problem is a simple consequence of Lemma \ref{lem:capacitycomparison}.

\begin{theorem}\label{thm:capacitycomp}
If distortion pair $(d_1,d_2)$ is achievable under distortion measures $w_1$ and $w_2$ subject to bandwidth expansion constraint $\kappa$, then there exists a virtual broadcast channel $p_{\hat{S}_1,\hat{S}_2|S}$ with $\mathbb{E}[w_i(S,\hat{S}_i)]\leq d_i$, $i=1,2$, such that
\begin{align*}
\mathcal{C}_{\textsf{in}}(p_S, p_{\hat{S}_1,\hat{S}_2|S})\subseteq\kappa\mathcal{C}(p_{Y_1,Y_2|X}).
\end{align*}
\end{theorem}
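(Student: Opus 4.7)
The plan is to obtain the theorem as a direct consequence of Lemma \ref{lem:capacitycomparison}, combined with a standard compactness/continuity argument to handle the $\epsilon\to 0$ limit implicit in Definition \ref{def:systemA2}. Fix a sequence $\epsilon_k\downarrow 0$. By the achievability hypothesis, for each $k$ there exist a blocklength $m_k$, a bandwidth expansion ratio $\rho_k\leq\kappa+\epsilon_k$, and encoding/decoding functions $f^{(m_k,\rho_k m_k)},g^{(\rho_k m_k,m_k)}_1,g^{(\rho_k m_k,m_k)}_2$ satisfying $\tfrac{1}{m_k}\sum_{t=1}^{m_k}\mathbb{E}[w_i(S(t),\hat{S}_i(t))]\leq d_i+\epsilon_k$. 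Each such scheme induces a conditional distribution $p^{(k)}_{\hat{S}^{m_k}_1,\hat{S}^{m_k}_2|S^{m_k}}$ that is degraded with respect to $p_{Y^{\rho_k m_k}_1,Y^{\rho_k m_k}_2|X^{\rho_k m_k}}$ in the sense of (\ref{eq:multicond}), and its single-letterized version $p^{(k)}_{\hat{S}_1,\hat{S}_2|S}$ is, by definition, a realizable virtual broadcast channel through $p_{Y_1,Y_2|X}$ with bandwidth expansion ratio $\rho_k$. Lemma \ref{lem:capacitycomparison} then yields
\begin{align*}
\mathcal{C}_{\textsf{in}}(p_S,p^{(k)}_{\hat{S}_1,\hat{S}_2|S})\subseteq \rho_k\,\mathcal{C}(p_{Y_1,Y_2|X})\subseteq(\kappa+\epsilon_k)\,\mathcal{C}(p_{Y_1,Y_2|X}),
\end{align*}
where the second inclusion uses $\rho_k\leq\kappa+\epsilon_k$ together with the fact that $\mathcal{C}(p_{Y_1,Y_2|X})$ is downward-closed and contains the origin.

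Next I would extract a limit point of the sequence $\{p^{(k)}_{\hat{S}_1,\hat{S}_2|S}\}$. Since $\mathcal{S},\hat{\mathcal{S}}_1,\hat{\mathcal{S}}_2$ are finite, the set of conditional distributions on these alphabets is a compact subset of a finite-dimensional simplex, so upon passing to a subsequence (still indexed by $k$) we have $p^{(k)}_{\hat{S}_1,\hat{S}_2|S}\to p^*_{\hat{S}_1,\hat{S}_2|S}$. The broadcast analogue of (\ref{eq:predis}) identifies $\mathbb{E}[w_i(S,\hat{S}_i)]$ computed under the single-letterized channel with the per-block distortion $\tfrac{1}{m_k}\sum_{t=1}^{m_k}\mathbb{E}[w_i(S(t),\hat{S}_i(t))]$, and because distortion is continuous in the joint distribution, the limit $p^*_{\hat{S}_1,\hat{S}_2|S}$ satisfies $\mathbb{E}[w_i(S,\hat{S}_i)]\leq d_i$ for $i=1,2$.

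The main (and really only) subtlety is passing to the limit in the set inclusion to obtain $\mathcal{C}_{\textsf{in}}(p_S,p^*_{\hat{S}_1,\hat{S}_2|S})\subseteq\kappa\,\mathcal{C}(p_{Y_1,Y_2|X})$. Pick any $(R_0,R_1,R_2)$ in the left-hand set; then some test distribution $p_{V_0,V_1,V_2|S}$, with the cardinality bounds recalled in Section \ref{sec:Review} (which keep its parameter space compact), certifies the Marton-type inequalities under $p_{V_0,V_1,V_2|S}\,p_S\,p^*_{\hat{S}_1,\hat{S}_2|S}$. Holding this $p_{V_0,V_1,V_2|S}$ fixed and evaluating against the converging virtual channels $p^{(k)}_{\hat{S}_1,\hat{S}_2|S}$, continuity of mutual information on the probability simplex implies that for every $\delta>0$ and all sufficiently large $k$ the triple $\bigl(\max(R_0-\delta,0),\max(R_1-\delta,0),\max(R_2-\delta,0)\bigr)$ lies in $\mathcal{C}_{\textsf{in}}(p_S,p^{(k)}_{\hat{S}_1,\hat{S}_2|S})\subseteq(\kappa+\epsilon_k)\,\mathcal{C}(p_{Y_1,Y_2|X})$. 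Dividing by $\kappa+\epsilon_k$, using closedness of $\mathcal{C}(p_{Y_1,Y_2|X})$ to pass to the limit $k\to\infty$, and finally sending $\delta\to 0$ (again using closedness) places $(R_0,R_1,R_2)$ in $\kappa\,\mathcal{C}(p_{Y_1,Y_2|X})$. The anticipated difficulty lies entirely in this bookkeeping---keeping the auxiliary test distribution fixed while the virtual channel varies, and invoking compactness of both the auxiliary simplex and the capacity region to close the argument.
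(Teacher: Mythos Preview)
Your proposal is correct and follows essentially the same approach as the paper: invoke Lemma~\ref{lem:capacitycomparison} for each $\epsilon_k$-scheme, extract a convergent subsequence of single-letterized virtual channels by compactness of the finite-dimensional simplex, and pass to the limit using continuity of distortion and closedness of $\mathcal{C}(p_{Y_1,Y_2|X})$. The paper merely says ``a simple limiting argument'' where you spell out the details of fixing the auxiliary test distribution and using continuity of mutual information, but the structure is the same.
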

\begin{IEEEproof}
Let $(d_1,d_2)$ be a distortion pair that is achievable under distortion measures $w_1$ and $w_2$ subject to bandwidth expansion constraint $\kappa$. In view of Definition \ref{def:systemA2} and the discussion in Section \ref{sec:versus}, for every $\epsilon>0$, there exists a virtual broadcast channel $p_{\hat{S}^{(\epsilon)}_1,\hat{S}^{(\epsilon)}_2|S}$ realizable through the physical broadcast channel $p_{Y_1,Y_2|X}$ with bandwidth expansion ratio $\rho\leq\kappa+\epsilon$ such that $\mathbb{E}[w_i(S,\hat{S}^{(\epsilon)}_i)]\leq d_i+\epsilon$, $i=1,2$.
It follows from Lemma \ref{lem:capacitycomparison} that, for such $p_{\hat{S}^{(\epsilon)}_1,\hat{S}^{(\epsilon)}_2|S}$, we have
\begin{align*}
\mathcal{C}_{\textsf{in}}(p_S, p_{\hat{S}^{(\epsilon)}_1,\hat{S}^{(\epsilon)}_2|S})\subseteq(\kappa+\epsilon)\mathcal{C}(p_{Y_1,Y_2|X}).
\end{align*}
Since  $\{p_{\hat{S}^{(\epsilon)}_1,\hat{S}^{(\epsilon)}_2|S}:\epsilon>0\}$ can be viewed as a subset of $\{\pi\in\mathbb{R}^{|\mathcal{S}|\times|\hat{\mathcal{S}}_1|\times|\hat{\mathcal{S}}_2|}_+:\sum_{\hat{s}_1\in\hat{\mathcal{S}}_1, \hat{s}_2\in\hat{\mathcal{S}}_2}\pi(s,\hat{s}_1,\hat{s}_2)=1, s\in\mathcal{S}\}$, which is compact under the Euclidean distance, one can find a sequence $\epsilon_1, \epsilon_2, \cdots$ converging to zero such that
\begin{align*}
\lim\limits_{k\rightarrow\infty}p_{\hat{S}^{(\epsilon_k)}_1,\hat{S}^{(\epsilon_k)}_2|S}= p_{\hat{S}_1,\hat{S}_2|S}
\end{align*}
for some $p_{\hat{S}_1,\hat{S}_2|S}$ with $\mathbb{E}[w_i(S,\hat{S}_i)]\leq d_i$, $i=1,2$. Now the proof can be completed via a simple limiting argument.
\end{IEEEproof}




\subsection{Application to the Problem of Broadcasting Correlated Sources}

In view of (\ref{eq:inview}), one can readily deduce from Theorem \ref{thm:capacitycomp} the following outer bound on the admissible source region.

\begin{theorem}\label{thm:losslesscomp}
If $p_{(S_1,S_2)}$ is admissible for broadcast channel $p_{Y_1,Y_2|X}$ subject to bandwidth expansion constraint $\kappa$, then
\begin{align}
\mathcal{C}(p_{(S_1,S_2)})\subseteq\kappa\mathcal{C}(p_{Y_1,Y_2|X}).\label{eq:implyy2}
\end{align}
\end{theorem}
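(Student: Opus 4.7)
The plan is to obtain Theorem \ref{thm:losslesscomp} as an immediate specialization of Theorem \ref{thm:capacitycomp} to the broadcasting-correlated-sources setting. First, I would translate the hypothesis of Theorem \ref{thm:losslesscomp} into the language of Definition \ref{def:systemA2}: take the compound source $S=(S_1,S_2)\sim p_{(S_1,S_2)}$, pick the Hamming distortion measures $w_1,w_2$ defined in (\ref{eq:measure}), and observe via the remark following Definition \ref{def:lossless} that admissibility of $p_{(S_1,S_2)}$ for $p_{Y_1,Y_2|X}$ subject to bandwidth expansion constraint $\kappa$ is exactly the statement that the distortion pair $(d_1,d_2)=(0,0)$ is achievable under $w_1,w_2$ subject to the same bandwidth constraint.

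Next I would apply Theorem \ref{thm:capacitycomp} to this instance. It produces a virtual broadcast channel $p_{\hat{S}_1,\hat{S}_2|S}$ with $\mathbb{E}[w_i(S,\hat{S}_i)]\leq 0$ for $i=1,2$ and with
\begin{align*}
\mathcal{C}_{\textsf{in}}(p_S, p_{\hat{S}_1,\hat{S}_2|S})\subseteq\kappa\mathcal{C}(p_{Y_1,Y_2|X}).
\end{align*}
Because $w_i$ is non-negative and vanishes only when $\hat{S}_i=S_i$, the zero-distortion condition forces, on the support of $p_{(S_1,S_2)}$, the identification $p_{\hat{S}_i|S}(\hat{s}_i|s_1,s_2)=\mathbb{I}(\hat{s}_i=s_i)$ for $i=1,2$. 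Consequently $p_{\hat{S}_1,\hat{S}_2|S}$ coincides with the deterministic ``identity'' broadcast channel $p_{S_1,S_2|(S_1,S_2)}$ used in the excerpt — this is precisely the specialization flagged just before Theorem \ref{thm:losslessnew}.

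Finally, I would invoke equation (\ref{eq:inview}), which asserts
\begin{align*}
\mathcal{C}_{\textsf{in}}(p_{(S_1,S_2)},p_{S_1,S_2|(S_1,S_2)})=\mathcal{C}(p_{(S_1,S_2)}),
\end{align*}
and substitute this into the inclusion furnished by Theorem \ref{thm:capacitycomp} to deduce $\mathcal{C}(p_{(S_1,S_2)})\subseteq\kappa\mathcal{C}(p_{Y_1,Y_2|X})$, as required.

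There is essentially no obstacle beyond bookkeeping; the only subtle point is that Theorem \ref{thm:capacitycomp} supplies a virtual channel with distortion at most zero only after a limiting argument (as is already handled in the proof of that theorem by compactness of the simplex of conditional distributions $p_{\hat{S}_1,\hat{S}_2|S}$), and one must note that any limit of zero-distortion virtual channels is itself zero-distortion, hence deterministic on the support of $p_{(S_1,S_2)}$. With that observation the reduction to (\ref{eq:inview}) is unconditional and the proof concludes.
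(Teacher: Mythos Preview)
Your proposal is correct and follows exactly the route the paper takes: the paper states that Theorem~\ref{thm:losslesscomp} is deduced from Theorem~\ref{thm:capacitycomp} ``in view of (\ref{eq:inview}),'' and your write-up simply fills in the details of that deduction (including the observation, made in Section~\ref{sec:lossless}, that zero Hamming distortion forces $p_{\hat{S}_1,\hat{S}_2|S}=p_{S_1,S_2|(S_1,S_2)}$ on the support of $p_{(S_1,S_2)}$).
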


The following result provides a complete characterization of the source admissible region and a rigorous justification of the source-message correspondence in the case where $S_1\leftrightarrow S_0\leftrightarrow S_2$ form a Markov chain.
\begin{corollary}\label{cor:Markov}
A source distribution $p_{(S_1,S_2)}$ with $S_1\leftrightarrow S_0\leftrightarrow S_2$ forming a Markov chain is admissible for  broadcast channel $p_{Y_1,Y_2|X}$ subject to bandwidth expansion constraint $\kappa$ if and only if
\begin{align}
(H(S_0),H(S_1|S_0),H(S_2|S_0))\in\kappa\mathcal{C}(p_{Y_1,Y_2|X}).\label{eq:comparee2}
\end{align}
\end{corollary}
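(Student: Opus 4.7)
My plan is to prove the two directions separately. For necessity, I would invoke Theorem \ref{thm:losslesscomp} and so it suffices to show that $(H(S_0), H(S_1|S_0), H(S_2|S_0)) \in \mathcal{C}(p_{(S_1, S_2)})$. To verify this containment I would set $U = S_0$ in the definition of $\mathcal{C}(p_{(S_1, S_2)})$. Since $S_0$ is by construction a deterministic function of both $S_1$ and $S_2$, we get $I(U; S_i) = H(S_0)$ for $i = 1, 2$; moreover, the Markov chain $S_1 \leftrightarrow S_0 \leftrightarrow S_2$ gives $H(S_1, S_2 | U) = H(S_1|S_0) + H(S_2|S_0)$. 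A direct check then shows that the triple $(H(S_0), H(S_1|S_0), H(S_2|S_0))$ satisfies all three defining inequalities (with equality in each), completing this direction.

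For sufficiency I would use a source-channel separation argument. Given $\epsilon > 0$, the plan is: (i) losslessly compress $S_0^m$ at rate $H(S_0) + \delta$ into a message $M_0 \in \mathcal{M}_0$; (ii) conditionally losslessly compress $S_i^m$ given $S_0^m$ at rate $H(S_i|S_0) + \delta$ into a message $M_i \in \mathcal{M}_i$, $i = 1, 2$, using Slepian--Wolf-style random binning so that the resulting triple $(M_0, M_1, M_2)$ is asymptotically uniform over $\mathcal{M}_0 \times \mathcal{M}_1 \times \mathcal{M}_2$ and essentially independent; and (iii) transmit $(M_0, M_1, M_2)$ via a broadcast channel code of blocklength $n \leq (\kappa + \epsilon) m$. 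Because $\mathcal{C}(p_{Y_1, Y_2|X})$ is closed and downward-closed in each coordinate, and $(\kappa + \epsilon)/\kappa > 1$, choosing $\delta$ small enough places $(H(S_0) + \delta, H(S_1|S_0) + \delta, H(S_2|S_0) + \delta)$ inside $(\kappa + \epsilon)\,\mathcal{C}(p_{Y_1, Y_2|X})$, so the required channel code exists. Receiver $i$ then extracts $(M_0, M_i)$ from $Y_i^n$, decodes $S_0^m$ from $M_0$, and finally decodes $S_i^m$ from $(S_0^m, M_i)$; the overall error probability is bounded, via the union bound, by the sum of the source-coding and channel-coding error probabilities, both of which vanish as $m \to \infty$.

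The main obstacle in this plan is step (ii), which must simultaneously guarantee (a) approximate independence and uniformity of $(M_0, M_1, M_2)$, so that the broadcast-channel achievability result applies directly to the triple of messages, and (b) reliable reconstruction of $S_i^m$ from $(S_0^m, M_i)$ at receiver $i$. Under the Markov assumption we have $H(S_1, S_2) = H(S_0) + H(S_1|S_0) + H(S_2|S_0)$, so the total rate budget of the three bitstreams matches the joint entropy and a Slepian--Wolf-type binning scheme achieves both properties; this is standard but has to be combined carefully with the common-message structure of the broadcast channel definition. Once this is in place, all remaining estimates are routine.
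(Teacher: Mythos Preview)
Your proposal is correct and follows essentially the same approach as the paper. For necessity, the paper likewise invokes Theorem~\ref{thm:losslesscomp} together with the membership $(H(S_0),H(S_1|S_0),H(S_2|S_0))\in\mathcal{C}(p_{(S_1,S_2)})$ (which you verify explicitly via $U=S_0$); for sufficiency, the paper uses exactly the separation scheme you describe---entropy-code $S_0^m$ into $M_0$, conditionally entropy-code $S_i^m$ given $S_0^m$ into $M_i$, then apply a good broadcast-channel code---though it simply states this in one line without the careful discussion of uniformity/independence of $(M_0,M_1,M_2)$ that you flag as the main technical point.
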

\begin{proof}
The proof of the ``if" part is based on a simple separation-based scheme. The transmitter first compresses $S^m_0$ via entropy coding and maps the resulting bits to the common message $M_0$; given $S^m_0$, the transmitter further compresses $S^m_i$ via conditional entropy coding and maps the resulting bits to the private message $M_i$, $i=1,2$. Note that (\ref{eq:comparee2}) ensures the existence of a good broadcast channel code such that receiver $i$ can recover $(M_0,M_i)$ and consequently $S^m_i$ with high probability, $i=1,2$.

The ``only if" part follows from Theorem \ref{thm:losslesscomp} as well as the fact that $(H(S_0),H(S_1|S_0),H(S_2|S_0))\in\mathcal{C}(p_{(S_1,S_2)})$ when $S_1\leftrightarrow S_0\leftrightarrow S_2$ form a Markov chain.
\end{proof}

In view of Theorem \ref{thm:degenerate} and Corollary \ref{cor:Markov}, the necessary conditions in Theorem \ref{thm:losslessnew} and Theorem \ref{thm:losslesscomp} are equivalent to (\ref{eq:degenerate}) and (\ref{eq:comparee2}), respectively, when $S_1\leftrightarrow S_0\leftrightarrow S_2$ form a Markov chain. It is known \cite{GGNY11} that in general $\mathcal{C}_{\textsf{out}}(p_{Y_1,Y_2|X})$ can be strictly larger than $\mathcal{C}(p_{Y_1,Y_2|X})$. So it is possible to find an example for which
\begin{align*}
&(H(S_0),H(S_1|S_0),H(S_2|S_0))\notin\kappa\mathcal{C}(p_{Y_1,Y_2|X}),\\
&(H(S_0),H(S_1|S_0),H(S_2|S_0))\in\kappa\mathcal{C}_{\textsf{out}}(p_{Y_1,Y_2|X}).
\end{align*}
This means\footnote{We believe that Theorem \ref{thm:losslessnew} also cannot be deduced from Theorem \ref{thm:losslesscomp}.} that Theorem \ref{thm:losslesscomp} cannot be deduced from Theorem \ref{thm:losslessnew}.

Note that both (\ref{eq:implyy1}) and (\ref{eq:implyy2}) imply
\begin{align}
\mathcal{C}(p_{(S_1,S_2)})\subseteq\kappa\mathcal{C}_{\textsf{out}}(p_{Y_1,Y_2|X}).\label{eq:implyy}
\end{align}
We shall show that (\ref{eq:implyy}) suffices to recover several existing results. Let  $\mathcal{C}_{\textsf{D}}(p_{Y_1,Y_2|X})$ denote the capacity region of broadcast channel $p_{Y_1,Y_2|X}$ with degraded message sets, i.e.,
\begin{align*}
\mathcal{C}_{\textsf{D}}(p_{Y_1,Y_2|X})=\{(R_0,R_2): (R_0,0,R_2)\in\mathcal{C}(p_{Y_1,Y_2|X})\}.
\end{align*}
It is known \cite{KM77b} that $\mathcal{C}_{\textsf{D}}(p_{Y_1,Y_2|X})$ is given by the set of $(R_0,R_2)\in\mathbb{R}^{2}_+$ satisfying
\begin{align*}
&R_0\leq I(V;Y_1),\\
&R_2\leq I(X;Y_2|V),\\
&R_0+R_2\leq I(X;Y_2)
\end{align*}
for some $p_{V,X,Y_1,Y_2}=p_{V|X}p_Xp_{Y_1,Y_2|X}$ with $|\mathcal{V}|\leq|\mathcal{X}|+1$. Moreover, it can be verified that
\begin{align}
\{(R_0,R_2): (R_0,0,R_2)\in\mathcal{C}_{\textsf{out}}(p_{Y_1,Y_2|X})\}=\mathcal{C}_{\textsf{D}}(p_{Y_1,Y_2|X}).\label{eq:degradedouter}
\end{align}

 The following result is a special case of \cite[Theorems 2 and 3]{KK08}.

\begin{corollary}\label{cor:degradedsource}
A source distribution $p_{(S_1,S_2)}$ with $S_1$ being a deterministic function of $S_2$ is admissible for  broadcast channel $p_{Y_1,Y_2|X}$ subject to bandwidth expansion constraint $\kappa$ if and only if
\begin{align}
(H(S_1),H(S_2|S_1))\in\kappa\mathcal{C}_{\textsf{D}}(p_{Y_1,Y_2|X}).\label{eq:degs}
\end{align}
\end{corollary}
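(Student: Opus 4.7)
The plan is to derive this corollary as an essentially immediate specialization of Corollary \ref{cor:Markov}. The crucial observation is that when $S_1$ is a deterministic function of $S_2$, the G\'acs-K\"orner-Witsenhausen common part coincides with $S_1$ itself: $S_1$ is trivially a function of $S_1$ and, by assumption, a function of $S_2$, so $S_1$ qualifies as a common function; conversely, any random variable that is simultaneously a function of $S_1$ and of $S_2$ is a function of $S_1$, so maximality forces $S_0=S_1$ up to relabeling. In particular, the Markov chain $S_1 \leftrightarrow S_0 \leftrightarrow S_2$ holds trivially.

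With this identification in hand, I would invoke Corollary \ref{cor:Markov} to obtain that $p_{(S_1,S_2)}$ is admissible if and only if
\[
(H(S_0),H(S_1|S_0),H(S_2|S_0))\in\kappa\mathcal{C}(p_{Y_1,Y_2|X}).
\]
Substituting $S_0=S_1$ collapses the middle coordinate to zero and reduces this to $(H(S_1),0,H(S_2|S_1))\in\kappa\mathcal{C}(p_{Y_1,Y_2|X})$. By the definition $\mathcal{C}_{\textsf{D}}(p_{Y_1,Y_2|X})=\{(R_0,R_2):(R_0,0,R_2)\in\mathcal{C}(p_{Y_1,Y_2|X})\}$, this is exactly $(H(S_1),H(S_2|S_1))\in\kappa\mathcal{C}_{\textsf{D}}(p_{Y_1,Y_2|X})$, which is (\ref{eq:degs}).

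If one prefers a self-contained argument bypassing Corollary \ref{cor:Markov}, the two directions can be handled separately. For the ``if'' part, a separation scheme suffices: entropy-code $S_1^m$ into the common message $M_0$ at rate approximately $H(S_1)/\kappa$ per channel use, conditionally entropy-code $S_2^m$ given $S_1^m$ into the private message $M_2$ at rate approximately $H(S_2|S_1)/\kappa$, and transmit via the degraded-message-set broadcast code guaranteed by (\ref{eq:degs}); receiver $1$ then recovers $S_1^m$ from $M_0$ and receiver $2$ recovers $S_2^m$ from $(M_0,M_2)$. For the ``only if'' part, I would apply Theorem \ref{thm:losslesscomp} after checking, via the choice $U=S_1$ in the definition of $\mathcal{C}(p_{(S_1,S_2)})$, that $(H(S_1),0,H(S_2|S_1))\in\mathcal{C}(p_{(S_1,S_2)})$; this choice gives $I(U;S_1)=I(U;S_2)=H(S_1)$ and $H(S_i|U)=H(S_i|S_1)$, so all three rate constraints are met with room to spare. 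The resulting containment then translates to (\ref{eq:degs}) via the definition of $\mathcal{C}_{\textsf{D}}$.

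There is no real obstacle to the argument: the entire content lies in the identification $S_0=S_1$ from the common-part definition, after which the statement becomes a direct translation of Corollary \ref{cor:Markov} through the $R_1=0$ slice of $\mathcal{C}(p_{Y_1,Y_2|X})$.
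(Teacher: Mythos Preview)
Your proposal is correct. The route via Corollary~\ref{cor:Markov} works cleanly once you observe $S_0=S_1$, and your alternative self-contained argument is also valid.

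There is a minor difference in emphasis worth noting. The paper's own proof of the ``only if'' direction deliberately invokes the \emph{weaker} containment $\mathcal{C}(p_{(S_1,S_2)})\subseteq\kappa\mathcal{C}_{\textsf{out}}(p_{Y_1,Y_2|X})$ (equation~(\ref{eq:implyy}), implied by either Theorem~\ref{thm:losslessnew} or Theorem~\ref{thm:losslesscomp}) together with the identity~(\ref{eq:degradedouter}) that the $R_1=0$ slice of $\mathcal{C}_{\textsf{out}}$ already equals $\mathcal{C}_{\textsf{D}}$. Your arguments instead pass through the stronger containment in $\kappa\mathcal{C}(p_{Y_1,Y_2|X})$ (either via Corollary~\ref{cor:Markov} or Theorem~\ref{thm:losslesscomp} directly) and then take the $R_1=0$ slice of the true capacity region. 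Both routes are sound; the paper's choice is made to highlight that~(\ref{eq:implyy}) alone---hence either outer bound---already suffices to recover this known result, without needing the full strength of Theorem~\ref{thm:losslesscomp}. Your route via Corollary~\ref{cor:Markov} is arguably the most economical packaging, since it absorbs both directions at once.
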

\begin{proof}
The proof of the ``if" part is based on a simple separation-based scheme. The transmitter first compresses $S^m_1$ via entropy coding and maps the resulting bits to the common message $M_0$; given $S^m_1$, the transmitter further compresses $S^m_2$ via conditional entropy coding and maps the resulting bits to the private message $M_2$. Note that (\ref{eq:degs}) ensures the existence of a good broadcast channel code such that receiver 1 can recover $M_0$ and consequently $S^m_1$ with high probability while receiver 2 can recover $(M_0,M_2)$ and consequently $S^m_2$ with high probability.

The ``only if" part follows by (\ref{eq:implyy}) and (\ref{eq:degradedouter}) as well as the fact that  $(H(S_1),0,H(S_2|S_1))\in\mathcal{C}(p_{(S_1,S_2)})$
when $S_1$ is a deterministic function of $S_2$.
\end{proof}

We say $p_{Y_2|X}$ is more capable than $p_{Y_1|X}$ if $I(X;Y_2)\geq I(X;Y_1)$ for all $p_X$ \cite{KM77} \cite[p. 121]{EGK11}.  It can be verified that
\begin{align}
\{(0,R_1,R_2)\in\mathcal{C}_{\textsf{out}}(p_{Y_1,Y_2|X})\}=\mathcal{C}_{\textsf{D}}(p_{Y_1,Y_2|X}) \label{eq:verifyy2}
\end{align}
when $p_{Y_2|X}$ is more capable than $p_{Y_1|X}$.


The following result \cite[Theorem 4]{KLS09} is a dual version of Corollary \ref{cor:degradedsource}.

\begin{corollary}\label{cor:morecapable}
A source distribution $p_{(S_1,S_2)}$ is admissible for  broadcast channel $p_{Y_1,Y_2|X}$ (with $p_{Y_2|X}$ more capable than $p_{Y_1|X}$) subject to bandwidth expansion constraint $\kappa$ if and only if (\ref{eq:degs}) holds.
\end{corollary}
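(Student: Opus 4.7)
The plan is to prove both directions of the corollary by mirroring the structure used in Corollary \ref{cor:degradedsource}, but with a crucial change in the rate triple chosen from $\mathcal{C}(p_{(S_1,S_2)})$.

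For the \emph{if} direction, I would use essentially the same separation-based scheme as in Corollary \ref{cor:degradedsource}: compress $S_1^m$ via entropy coding and map the resulting bits to the common message $M_0$; given $S_1^m$, compress $S_2^m$ via conditional entropy coding and map the result to the private message $M_2$ intended only for receiver 2. Since (\ref{eq:degs}) is assumed, there exists, for every $\epsilon>0$, a degraded-message-set broadcast code such that receiver 1 recovers $M_0$---and hence $S_1^m$---with vanishing error probability, while receiver 2 recovers $(M_0, M_2)$ and then reconstructs $S_2^m$ by combining its estimate of $S_1^m$ with the conditional description, again with vanishing error probability.

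For the \emph{only if} direction, the key observation is that $(0, H(S_1), H(S_2|S_1)) \in \mathcal{C}(p_{(S_1, S_2)})$ holds unconditionally: choosing $U$ to be a constant in the definition of $\mathcal{C}(p_{(S_1, S_2)})$ reduces the constraints to $R_0 \le 0$, $R_i \le H(S_i)$ for $i=1,2$, and $R_1 + R_2 \le H(S_1, S_2)$, all of which the triple satisfies via $H(S_2|S_1) \le H(S_2)$ and the chain rule $H(S_1) + H(S_2|S_1) = H(S_1, S_2)$. Combining this with (\ref{eq:implyy})---a consequence of either Theorem \ref{thm:losslessnew} or Theorem \ref{thm:losslesscomp}---yields $(0, H(S_1), H(S_2|S_1)) \in \kappa \mathcal{C}_{\textsf{out}}(p_{Y_1, Y_2|X})$; invoking (\ref{eq:verifyy2}), which depends on the more-capable hypothesis, then translates this into $(H(S_1), H(S_2|S_1)) \in \kappa \mathcal{C}_{\textsf{D}}(p_{Y_1, Y_2|X})$, i.e., (\ref{eq:degs}).

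The main obstacle is conceptual rather than computational. In Corollary \ref{cor:degradedsource} the rate triple $(H(S_1), 0, H(S_2|S_1))$ lies in $\mathcal{C}(p_{(S_1,S_2)})$ only because $S_1$ is a deterministic function of $S_2$---a source-side degeneracy that lets $H(S_1)$ be realized as a common rate. Here we have no such source-side structure; instead we exploit a channel-side degeneracy: under the more-capable hypothesis a private rate intended for receiver 1 effectively serves as a common rate, since receiver 2 can automatically decode it. This is exactly the content of (\ref{eq:verifyy2}), and it is what lets the trivial triple $(0, H(S_1), H(S_2|S_1))$ play the same role in the outer bound argument as the non-trivial triple used previously.
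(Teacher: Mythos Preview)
Your proposal is correct and follows essentially the same approach as the paper's proof: the ``if'' part reuses the separation scheme from Corollary~\ref{cor:degradedsource}, and the ``only if'' part combines the membership $(0,H(S_1),H(S_2|S_1))\in\mathcal{C}(p_{(S_1,S_2)})$ with (\ref{eq:implyy}) and (\ref{eq:verifyy2}). Your explicit verification of that membership via the choice of constant $U$ is a nice addition that the paper leaves implicit.
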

\begin{proof}
The proof of the ``if" part is the same as that for Corollary \ref{cor:degradedsource}.
The ``only if" part follows by (\ref{eq:implyy}) and (\ref{eq:verifyy2}) as well as the fact that $(0,H(S_1),H(S_2|S_1))\in\mathcal{C}(p_{(S_1,S_2)})$.
\end{proof}

\section{Conclusion}\label{sec:conclusion}

We have established two necessary conditions for the lossy source broadcast problem (Theorem \ref{thm:ordering} and Theorem \ref{thm:capacitycomp}), from which new outer bounds on the admissible source region (Theorem \ref{thm:losslessnew} and Theorem \ref{thm:losslesscomp}) are deduced. It is expected that the idea of deriving converse results via suitable comparisons between the virtual channel (induced by the source(s) and the reconstruction(s)) and the physical channel has potential applications beyond the lossy source broadcast problem considered in the present paper.




\appendices

\section{Proof of Lemma \ref{lem:ordering}}\label{app:ordering}

Let $(X^{\rho m}, Y^{\rho m}_1, Y^{\rho m}_2, \hat{S}^m_1, \hat{S}^m_2)$ be jointly distributed with $S^m$ according to
\begin{align}
p_{S^m}(s^m)p_{X^{\rho m}|S^m}(x^{\rho m}|s^m)p_{Y^{\rho m}_1,Y^{\rho m}_2|X^{\rho m}}(y^{\rho m}_1,y^{\rho m}_2|x^{\rho m})\prod\limits_{i=1}^2p_{\hat{S}^m_i|Y^{\rho m}_i}(\hat{s}^m_i|y^{\rho m}_i),\label{eq:multijoint}
\end{align}
where
\begin{align*}
&p_{S^m}(s^m)=\prod\limits_{t=1}^mp_S(s(t)),\\
&p_{Y^{\rho m}_1,Y^{\rho m}_2|X^{\rho m}}(y^{\rho m}_1,y^{\rho m}_2|x^{\rho m})=\prod\limits_{q=1}^{\rho m}(y_1(q),y_2(q)|x(q)).
\end{align*}
Note that the induced conditional distribution\footnote{If $p_S(s)>0$ for all $s\in\mathcal{S}$, then $p_{\hat{S}^m_1,\hat{S}^m_2|S^m}$ is uniquely given by (\ref{eq:multicond}). If $p_S(s)=0$ for some $s\in\mathcal{S}$, then the conditional distribution in (\ref{eq:multicond}) is not the only one that is compatible with the joint distribution in (\ref{eq:multijoint}); in this case we simply use (\ref{eq:multicond}) as the definition of $p_{\hat{S}^m_1,\hat{S}^m_2|S^m}$.} $p_{\hat{S}^m_1,\hat{S}^m_2|S^m}$ is degraded with respect to $p_{Y^{\rho m}_1,Y^{\rho m}_2|X^{\rho m}}$; in fact, every $p_{\hat{S}^m_1,\hat{S}^m_2|S^m}$ degraded with respect to $p_{Y^{\rho m}_1,Y^{\rho m}_2|X^{\rho m}}$ can be obtained in this way.

Let $(U^m_{0},\cdots,U^m_{L})$ be jointly distributed with $(S^m, X^{\rho m}, Y^{\rho m}_1, Y^{\rho m}_2, \hat{S}^m_{1},\hat{S}^m_{2})$ such that $(U^m_{0},\cdots,U^m_{L})\leftrightarrow S^m\leftrightarrow(X^{\rho m}, Y^{\rho m}_1, Y^{\rho m}_2,\hat{S}^m_{1},\hat{S}^m_{2})$ form a Markov chain,
and $(U_0(t),\cdots,U_L(t),S(t))$, $t=1,\cdots,m$, are independent and identically distributed.
Let $T$ be a random variable independent of $(U^m_{0,1},\cdots,U^m_{L},S^m,\hat{S}^m_{1},\hat{S}^m_{2})$ and uniformly distributed over $\{1,\cdots,m\}$. Define
\begin{align*}
&U_i=U_i(T),\quad i=0,\cdots,L,\\
&S=S(T),\\
&\hat{S}_i=\hat{S}_i(T),\quad i=1,2.
\end{align*}
The following properties of $(U_0,\cdots,U_L,S,\hat{S}_1,\hat{S}_2)$ can be easily verified:
\begin{enumerate}
\item the distribution of $(U_0,\cdots,U_L,S)$ is identical with that of $(U_0(t),U_1(t),U_2(t),S(t))$ for every $t$;

\item $(U_0,\cdots,U_L)\leftrightarrow S\leftrightarrow (\hat{S}_1,\hat{S}_2)$ form a Markov chain;

\item $p_{\hat{S}_1,\hat{S}_2|S}$ is the single-letterized version\footnote{Strictly speaking, $p_{\hat{S}_1,\hat{S}_2|S}(\cdot,\cdot|s)$ is uniquely specified only for those $s\in\{s'\in\mathcal{S}:p_S(s')>0\}$. However, this suffices for our purpose since the results in the present paper depend on $p_{\hat{S}_1,\hat{S}_2|S}$ only through $p_Sp_{\hat{S}_1,\hat{S}_2|S}$.} of $p_{\hat{S}^m_1,\hat{S}^m_2|S^m}$.
\end{enumerate}
Note that
\begin{align*}
&\sum\limits_{i=1}^kI(U^m_{\mathcal{A}_i};\hat{S}^m_{a(i)}|U^m_{\cup_{j=1}^{i-1}\mathcal{A}_j})\\
&=\sum\limits_{i=1}^k\sum\limits_{t=1}^mI(U_{\mathcal{A}_i}(t);\hat{S}^m_{a(i)}|U^m_{\cup_{j=1}^{i-1}\mathcal{A}_j},U^{t-1}_{\mathcal{A}_i})\\
&=\sum\limits_{i=1}^k\sum\limits_{t=1}^mI(U_{\mathcal{A}_i}(t);\hat{S}^m_{a(i)},U^{t-1}_{\cup_{j=1}^{i-1}\mathcal{A}_j},U^{m}_{\cup_{j=1}^{i-1}\mathcal{A}_j,t+1},U^{t-1}_{\mathcal{A}_i}|U_{\cup_{j=1}^{i-1}\mathcal{A}_j}(t))\\
&\geq\sum\limits_{i=1}^k\sum\limits_{t=1}^mI(U_{\mathcal{A}_i}(t);\hat{S}_{a(i)}(t)|U_{\cup_{j=1}^{i-1}\mathcal{A}_j}(t))\\
&=m\sum\limits_{i=1}^k I(U_{\mathcal{A}_i}(T);\hat{S}_{a(i)}(T)|U_{\cup_{j=1}^{i-1}\mathcal{A}_j}(T),T)\\
&=m\sum\limits_{i=1}^k I(U_{\mathcal{A}_i}(T);\hat{S}_{a(i)}(T),T|U_{\cup_{j=1}^{i-1}\mathcal{A}_j}(T))\\
&\geq m\sum\limits_{i=1}^kI(U_{\mathcal{A}_i}(T);\hat{S}_{a(i)}(T)|U_{\cup_{j=1}^{i-1}\mathcal{A}_j}(T))\\
&=m\sum\limits_{i=1}^k I(U_{\mathcal{A}_i};\hat{S}_{a(i)}|U_{\cup_{j=1}^{i-1}\mathcal{A}_j}).
\end{align*}
On the other hand, we have
\begin{align}
\sum\limits_{i=1}^kI(U^m_{\mathcal{A}_i};\hat{S}^m_{a(i)}|U^m_{\cup_{j=1}^{i-1}\mathcal{A}_j})\leq\sum\limits_{i=1}^kI(U^m_{\mathcal{A}_i};Y^{\rho m}_{a(i)}|U^m_{\cup_{j=1}^{i-1}\mathcal{A}_j}).\label{eq:ontheother}
\end{align}
We shall show that, for $l=1,\cdots,k$,
\begin{align}
&\sum\limits_{i=l}^kI(U^m_{\mathcal{A}_i};Y^{\rho m}_{a(i)}|U^m_{\cup_{j=1}^{i-1}\mathcal{A}_j})\nonumber\\
&\leq \sum\limits_{q=1}^{\rho m}I(Y^{\rho m}_{2,q+1};Y_1(q)|U^m_{\cup_{j=1}^{l-1}\mathcal{A}_j},Y^{q-1}_{1})+\sum\limits_{i=l}^k\sum\limits_{q=1}^{\rho m}I(U^m_{\mathcal{A}_i};Y_{a(i)}(q)|U^m_{\cup_{j=1}^{i-1}\mathcal{A}_j},Y^{q-1}_{1},Y^{\rho m}_{2,q+1}),\label{eq:equiv1}
\end{align}
which, in light of Csisz\'{a}r sum identity \cite[p. 25]{EGK11}, is equivalent to
\begin{align}
&\sum\limits_{i=l}^kI(U^m_{\mathcal{A}_i};Y^{\rho m}_{a(i)}|U^m_{\cup_{j=1}^{i-1}\mathcal{A}_j})\nonumber\\
&\leq\sum\limits_{q=1}^{\rho m}I(Y^{q-1}_{1};Y_2(q)|U^m_{\cup_{j=1}^{l-1}\mathcal{A}_j},Y^{\rho m}_{2,q+1})+\sum\limits_{i=l}^k\sum\limits_{q=1}^{\rho m}I(U^m_{\mathcal{A}_i};Y_{a(i)}(q)|U^m_{\cup_{j=1}^{i-1}\mathcal{A}_j},Y^{q-1}_{1},Y^{\rho m}_{2,q+1}).\label{eq:equiv2}
\end{align}
First consider the case $l=k$. If $a(k)=1$, we have
\begin{align*}
&I(U^m_{\mathcal{A}_i};Y^{\rho m}_{a(k)}|U^m_{\cup_{j=1}^{k-1}\mathcal{A}_j})\\
&=\sum\limits_{q=1}^{\rho m}I(U^m_{\mathcal{A}_i};Y_1(q)|U^m_{\cup_{j=1}^{k-1}\mathcal{A}_j},Y^{q-1}_{1})\\
&\leq\sum\limits_{q=1}^{\rho m}I(U^m_{\mathcal{A}_i},Y^{\rho m}_{2,q+1};Y_1(q)|U^m_{\cup_{j=1}^{k-1}\mathcal{A}_j},Y^{q-1}_{1})\\
&=\sum\limits_{q=1}^{\rho m}I(Y^{\rho m}_{2,q+1};Y_1(q)|U^m_{\cup_{j=1}^{k-1}\mathcal{A}_j},Y^{q-1}_{1})+\sum\limits_{q=1}^{\rho m}I(U^m_{\mathcal{A}_i};Y_1(q)|U^m_{\cup_{j=1}^{k-1}\mathcal{A}_j},Y^{q-1}_{1},Y^{\rho m}_{2,q+1});
\end{align*}
if $a(k)=2$, we have
\begin{align*}
&I(U^m_{\mathcal{A}_i};Y^{\rho m}_{a(k)}|U^m_{\cup_{j=1}^{k-1}\mathcal{A}_j})\\
&=\sum\limits_{q=1}^{\rho m}I(U^m_{\mathcal{A}_i};Y_2(q)|U^m_{\cup_{j=1}^{k-1}\mathcal{A}_j},Y^{\rho m}_{2,q+1})\\
&\leq\sum\limits_{q=1}^{\rho m}I(U^m_{\mathcal{A}_i},Y^{q-1}_{1};Y_2(q)|U^m_{\cup_{j=1}^{k-1}\mathcal{A}_j},Y^{\rho m}_{2,q+1})\\
&=\sum\limits_{q=1}^{\rho m}I(Y^{q-1}_{1};Y_2(q)|U^m_{\cup_{j=1}^{k-1}\mathcal{A}_j},Y^{\rho m}_{2,q+1})+\sum\limits_{q=1}^{\rho m}I(U^m_{\mathcal{A}_i};Y_2(q)|U^m_{\cup_{j=1}^{k-1}\mathcal{A}_j},Y^{q-1}_{1},Y^{\rho m}_{2,q+1}).
\end{align*}
Therefore, (\ref{eq:equiv1}) and (\ref{eq:equiv2}) hold when $l=k$. Now we proceed by induction on $l$. If $a(l)=1$, we have
\begin{align}
&\sum\limits_{i=l}^kI(U^m_{\mathcal{A}_i};Y^{\rho m}_{a(i)}|U^m_{\cup_{j=1}^{i-1}\mathcal{A}_j})\nonumber\\
&=I(U^m_{\mathcal{A}_l};Y^{\rho m}_{1}|U^m_{\cup_{j=1}^{l-1}\mathcal{A}_j})+\sum\limits_{i=l+1}^kI(U^m_{\mathcal{A}_i};Y^{\rho m}_{a(i)}|U^m_{\cup_{j=1}^{i-1}\mathcal{A}_j})\nonumber\\
&\leq I(U^m_{\mathcal{A}_l};Y^{\rho m}_{1}|U^m_{\cup_{j=1}^{l-1}\mathcal{A}_j})+\sum\limits_{q=1}^{\rho m}I(Y^{\rho m}_{2,q+1};Y_1(q)|U^m_{\cup_{j=1}^{l}\mathcal{A}_j},Y^{q-1}_{1})\nonumber\\
&\quad+\sum\limits_{i=l+1}^k\sum\limits_{q=1}^{\rho m}I(U^m_{\mathcal{A}_i};Y_{a(i)}(q)|U^m_{\cup_{j=1}^{i-1}\mathcal{A}_j},Y^{q-1}_{1},Y^{\rho m}_{2,q+1})\label{eq:inductionhyp}\\
&=\sum\limits_{q=1}^{\rho m}I(U^m_{\mathcal{A}_l};Y_{1}(q)|U^m_{\cup_{j=1}^{l-1}\mathcal{A}_j},Y^{q-1}_{1})+\sum\limits_{q=1}^{\rho m}I(Y^{\rho m}_{2,q+1};Y_1(q)|U^m_{\cup_{j=1}^{l}\mathcal{A}_j},Y^{q-1}_{1})\nonumber\\
&\quad+\sum\limits_{i=l+1}^k\sum\limits_{q=1}^{\rho m}I(U^m_{\mathcal{A}_i};Y_{a(i)}(q)|U^m_{\cup_{j=1}^{i-1}\mathcal{A}_j},Y^{q-1}_{1},Y^{\rho m}_{2,q+1})\nonumber\\
&=\sum\limits_{q=1}^{\rho m}I(U^m_{\mathcal{A}_l},Y^{\rho m}_{2,q+1};Y_{1}(q)|U^m_{\cup_{j=1}^{l-1}\mathcal{A}_j},Y^{q-1}_{1})+\sum\limits_{i=l+1}^k\sum\limits_{q=1}^{\rho m}I(U^m_{\mathcal{A}_i};Y_{a(i)}(q)|U^m_{\cup_{j=1}^{i-1}\mathcal{A}_j},Y^{q-1}_{1},Y^{\rho m}_{2,q+1})\nonumber\\
&=\sum\limits_{q=1}^{\rho m}I(Y^{\rho m}_{2,q+1};Y_1(q)|U^m_{\cup_{j=1}^{l-1}\mathcal{A}_j},Y^{q-1}_{1})+\sum\limits_{i=l}^k\sum\limits_{q=1}^{\rho m}I(U^m_{\mathcal{A}_i};Y_{a(i)}(q)|U^m_{\cup_{j=1}^{i-1}\mathcal{A}_j},Y^{q-1}_{1},Y^{\rho m}_{2,q+1}),\nonumber
\end{align}
where (\ref{eq:inductionhyp}) follows by the induction hypothesis. Therefore, (\ref{eq:equiv1}) holds when $a(l)=1$. Similarly, it can be shown that (\ref{eq:equiv2}) holds when $a(l)=2$. This finishes the induction argument.

Let $Q$ be a random variable independent of $(U^m_{0,1},\cdots,U^m_{L},X^{\rho m},Y^{\rho m}_{1},Y^{\rho m}_{2})$ and uniformly distributed over $\{1,\cdots,\rho m\}$. Define
\begin{align*}
&V_i=(U^m_{i},Y^{Q-1}_{1},Y^{\rho m}_{2,Q+1},Q),\quad i=0,\cdots,L,\\
&X=X(Q),\\
&Y_i=Y_i(Q),\quad i=1,2.
\end{align*}
It is clear that $(V_0,\cdots,V_L)\leftrightarrow X\leftrightarrow(Y_1,Y_2)$ form a Markov chain; moreover, $p_X$ does not depend on the choice of $p_{U_0,\cdots,U_L|S}$.
Continuing from (\ref{eq:ontheother}),
\begin{align}
&\sum\limits_{i=1}^kI(U^m_{\mathcal{A}_i};\hat{S}^m_{a(i)}|U^m_{\cup_{j=1}^{i-1}\mathcal{A}_j})\nonumber\\
&\leq\sum\limits_{q=1}^{\rho m}I(U^m_{\mathcal{A}_1},Y^{q-1}_{1},Y^{\rho m}_{2,q+1};Y_{a(1)}(q))+\sum\limits_{i=2}^k\sum\limits_{q=1}^{\rho m}I(U^m_{\mathcal{A}_i};Y_{a(i)}(q)|U^m_{\cup_{j=1}^{i-1}\mathcal{A}_j},Y^{q-1}_{1},Y^{\rho m}_{2,q+1})\label{eq:byind}\\
&=\rho mI(U^m_{\mathcal{A}_1},Y^{Q-1}_{1},Y^{\rho m}_{2,Q+1};Y_{a(1)}(Q)|Q)+\rho m\sum\limits_{i=2}^kI(U^m_{\mathcal{A}_i};Y_{a(i)}(Q)|U^m_{\cup_{j=1}^{i-1}\mathcal{A}_j},Y^{Q-1}_{1},Y^{\rho m}_{2,Q+1},Q)\nonumber\\
&\leq \rho mI(U^m_{\mathcal{A}_1},Y^{Q-1}_{1},Y^{\rho m}_{2,Q+1},Q;Y_{a(1)}(Q))+\rho m\sum\limits_{i=2}^kI(U^m_{\mathcal{A}_i};Y_{a(i)}(Q)|U^m_{\cup_{j=1}^{i-1}\mathcal{A}_j},Y^{Q-1}_{1},Y^{\rho m}_{2,Q+1},Q)\nonumber\\
&=\rho m\sum\limits_{i=1}^kI(V_{\mathcal{A}_i};Y_{a(i)}|V_{\cup_{j=1}^{i-1}\mathcal{A}_j}),\nonumber
\end{align}
where (\ref{eq:byind}) is due to (\ref{eq:equiv1}) and (\ref{eq:equiv2}) as well as the fact that
\begin{align*}
&\sum\limits_{q=1}^{\rho m}I(Y^{\rho m}_{2,q+1};Y_1(q)|Y^{q-1}_{1})+\sum\limits_{q=1}^{\rho m}I(U^m_{\mathcal{A}_1};Y_{a(1)}(q)|Y^{q-1}_{1},Y^{\rho m}_{2,q+1})\\
&=\sum\limits_{q=1}^{\rho m}I(Y^{q-1}_{1};Y_2(q)|Y^{\rho m}_{2,q+1})+\sum\limits_{q=1}^{\rho m}I(U^m_{\mathcal{A}_1};Y_{a(1)}(q)|Y^{q-1}_{1},Y^{\rho m}_{2,q+1})\\
&\leq\sum\limits_{q=1}^{\rho m}I(U^m_{\mathcal{A}_1},Y^{q-1}_{1},Y^{\rho m}_{2,q+1};Y_{a(1)}(q)).
\end{align*}
This completes the proof of Theorem \ref{lem:ordering}.

\section{Proof of Theorem \ref{thm:ordering}}\label{app:TheoremI}

According to  Lemma \ref{lem:ordering}, for every virtual broadcast channel $p_{\hat{S}_1,\hat{S}_2|S}$ realizable through the physical broadcast channel $p_{Y_1,Y_2|X}$ with bandwidth expansion ratio $\rho$, there exists an input distribution $p_X$ such that, for any $p_{U_0,\cdots,U_L,S,\hat{S}_1,\hat{S}_2}=p_{U_0,\cdots,U_L|S}p_Sp_{\hat{S}_1,\hat{S}_2|S}$, one can find $p_{V_0,\cdots,V_L,X,Y_1,Y_2}=p_{V_0,\cdots,V_L|X}p_Xp_{Y_1,Y_2|X}$ satisfying
\begin{align}
\sum\limits_{i=1}^kI(U_{\mathcal{A}_i};\hat{S}_{a(i)}|U_{\cup_{j=1}^{i-1}\mathcal{A}_j})\leq\rho\sum\limits_{i=1}^kI(V_{\mathcal{A}_i};Y_{a(i)}|V_{\cup_{j=1}^{i-1}\mathcal{A}_j})\label{eq:keyeq}
\end{align}
for any $\mathcal{A}_i\subseteq\{0,\cdots,L\}$ and $a(i)\in\{1,2\}$, $i=1,\cdots,k$. Now choose $L=2$. Setting $k=1$, $\mathcal{A}_1=\{0\}$, and $a(1)=1$ in (\ref{eq:keyeq}) gives
\begin{align}
I(U_0;\hat{S}_1)\leq\rho I(V_0;Y_1).\label{eq:constr1}
\end{align}
Setting $k=1$, $\mathcal{A}_1=\{0\}$, and $a(1)=2$ in (\ref{eq:keyeq}) gives
\begin{align}
I(U_0;\hat{S}_2)\leq\rho I(V_0;Y_2).\label{eq:constr2}
\end{align}
Setting $k=1$, $\mathcal{A}_1=\{0,1\}$  and $a(1)=1$ in (\ref{eq:keyeq}) gives
\begin{align}
I(U_0,U_1;\hat{S}_1)\leq\rho I(V_0,V_1;Y_1).\label{eq:constr3}
\end{align}
Setting $k=1$, $\mathcal{A}_1=\{0,2\}$, and $a(1)=2$ in (\ref{eq:keyeq}) gives
\begin{align}
I(U_0,U_2;\hat{S}_2)\leq\rho I(V_0,V_2;Y_2).\label{eq:constr4}
\end{align}
Setting $k=2$, $\mathcal{A}_1=\{0\}$, $\mathcal{A}_2=\{2\}$, $a(1)=1$, and $a(2)=2$ in (\ref{eq:keyeq}) gives
\begin{align}
I(U_0;\hat{S}_1)+I(U_2;\hat{S}_2|U_0)\leq\rho[I(V_0;Y_1)+I(V_2;Y_2|V_0)].\label{eq:constr5}
\end{align}
Setting $k=2$, $\mathcal{A}_1=\{0\}$, $\mathcal{A}_2=\{1\}$, $a(1)=2$, and $a(2)=1$ in (\ref{eq:keyeq}) gives
\begin{align}
I(U_0;\hat{S}_2)+I(U_1;\hat{S}_1|U_0)\leq\rho[I(V_0;Y_2)+I(V_1;Y_1|V_0)].\label{eq:constr6}
\end{align}
Setting $k=2$, $\mathcal{A}_1=\{0,1\}$, $\mathcal{A}_2=\{2\}$, $a(1)=1$, and $a(2)=2$ in (\ref{eq:keyeq}) gives
\begin{align}
I(U_0,U_1;\hat{S}_1)+I(U_2;\hat{S}_2|U_0,U_1)\leq\rho[I(V_0,V_1;Y_1)+I(V_2;Y_2|V_0,V_1)].\label{eq:constr7}
\end{align}
Setting $k=2$, $\mathcal{A}_1=\{0,2\}$, $\mathcal{A}_2=\{1\}$, $a(1)=2$, and $a(2)=1$ in (\ref{eq:keyeq}) gives
\begin{align}
I(U_0,U_2;\hat{S}_2)+I(U_1;\hat{S}_1|U_0,U_2)\leq\rho[I(V_0,V_2;Y_2)+I(V_1;Y_1|V_0,V_2)].\label{eq:constr8}
\end{align}
Setting $k=3$, $\mathcal{A}_1=\{0\}$, $\mathcal{A}_2=\{2\}$, $\mathcal{A}_3=\{1\}$, $a(1)=a(3)=1$, and $a(2)=2$ in (\ref{eq:keyeq}) gives
\begin{align}
I(U_0;\hat{S}_1)+I(U_2;\hat{S}_2|U_0)+I(U_1;\hat{S}_1|U_0,U_2)\leq\rho[I(V_0;Y_1)+I(V_2;Y_2|V_0)+I(V_1;Y_1|V_0,V_2)].\label{eq:constr9}
\end{align}
Setting $k=3$, $\mathcal{A}_1=\{0\}$, $\mathcal{A}_2=\{1\}$, $\mathcal{A}_3=\{2\}$, $a(1)=a(3)=2$, and $a(2)=1$ in (\ref{eq:keyeq}) gives
\begin{align}
I(U_0;\hat{S}_2)+I(U_1;\hat{S}_1|U_0)+I(U_2;\hat{S}_2|U_0,U_1)\leq\rho[I(V_0;Y_2)+I(V_1;Y_1|V_0)+I(V_2;Y_2|V_0,V_1)].\label{eq:constr10}
\end{align}
Let $\mathcal{R}(p_S,p_{\hat{S}_1,\hat{S}_2|S})$ denote the set of $(r_1,\cdots,r_{10})\in\mathbb{R}^{10}_+$ satisfying
\begin{align*}
&r_1\leq I(U_0;\hat{S}_1),\\
&r_2\leq I(U_0;\hat{S}_2),\\
&r_3\leq I(U_0,U_1;\hat{S}_1),\\
&r_4\leq I(U_0,U_2;\hat{S}_2),\\
&r_5\leq I(U_0;\hat{S}_1)+I(U_2;\hat{S}_2|U_0),\\
&r_6\leq I(U_0;\hat{S}_2)+I(U_1;\hat{S}_1|U_0),\\
&r_7\leq I(U_0,U_1;\hat{S}_1)+I(S;\hat{S}_2|U_0,U_1),\\
&r_8\leq I(U_0,U_2;\hat{S}_2)+I(S;\hat{S}_1|U_0,U_2),\\
&r_9\leq I(U_0;\hat{S}_1)+I(U_2;\hat{S}_2|U_0)+I(S;\hat{S}_1|U_0,U_2),\\
&r_{10}\leq I(U_0;\hat{S}_2)+I(U_1;\hat{S}_1|U_0)+I(S;\hat{S}_2|U_0,U_1)
\end{align*}
for some $p_{U_0,U_1,U_2,S,\hat{S}_1,\hat{S}_2}=p_{U_0,U_1,U_2|S}p_Sp_{\hat{S}_1,\hat{S}_2|S}$; analogously,
let $\mathcal{R}(p_X,p_{Y_1,Y_2|X})$ denote the set of $(r_1,\cdots,r_{10})\in\mathbb{R}^{10}_+$ satisfying
\begin{align*}
&r_1\leq I(V_0;Y_1),\\
&r_2\leq I(V_0;Y_2),\\
&r_3\leq I(V_0,V_1;Y_1),\\
&r_4\leq I(V_0,V_2;Y_2),\\
&r_5\leq I(V_0;Y_1)+I(V_2;Y_2|V_0),\\
&r_6\leq I(V_0;Y_2)+I(V_1;Y_1|V_0),\\
&r_7\leq I(V_0,V_1;Y_1)+I(X;Y_2|V_0,V_1),\\
&r_8\leq I(V_0,V_2;Y_2)+I(X;Y_1|V_0,V_2),\\
&r_9\leq I(V_0;Y_1)+I(V_2;Y_2|V_0)+I(X;Y_1|V_0,V_2),\\
&r_{10}\leq I(V_0;Y_2)+I(V_1;Y_1|V_0)+I(X;Y_2|V_0,V_1)
\end{align*}
for some $p_{V_0,V_1,V_2,X,Y_1,Y_2}=p_{V_0,V_1,V_2|X}p_Xp_{Y_1,Y_2|X}$.
It can be shown (see \cite[Remark 3.6]{NEG07}) that (\ref{eq:constr1})-(\ref{eq:constr10}) can be stated equivalently as
\begin{align}
\mathcal{R}(p_S,p_{\hat{S}_1,\hat{S}_2|S})\subseteq\rho\mathcal{R}(p_X,p_{Y_1,Y_2|X}).\label{eq:eqstate}
\end{align}
Moreover, the following argument by Nair \cite{Nair13p} indicates that, to compute $\mathcal{R}(p_S,p_{\hat{S}_1,\hat{S}_2|S})$ and $\mathcal{R}(p_X,p_{Y_1,Y_2|X})$, it suffices
to consider $|\mathcal{U}_0|\leq|\mathcal{S}|+5$, $|\mathcal{V}_0|\leq|\mathcal{X}|+5$, $|\mathcal{U}_i|\leq |\mathcal{S}|$, and $|\mathcal{V}_i|\leq |\mathcal{X}|$, $i=1,2$.
We shall only give the proof for $\mathcal{R}(p_S,p_{\hat{S}_1,\hat{S}_2|S})$ since $\mathcal{R}(p_X,p_{Y_1,Y_2|X})$ can be treated in the same way. The main idea is that it suffices for $U_1$ and $U_2$ to preserve the extreme points of $\mathcal{R}(p_S,p_{\hat{S}_1,\hat{S}_2|S})$, and then $U_0$ can be used to convexify the region. Note that every convex combination of the constraints in the definition of $\mathcal{R}(p_S,p_{\hat{S}_1,\hat{S}_2|S})$ can be written in the form
\begin{align*}
&\lambda_1H(\hat{S}_1)+\lambda_2H(\hat{S}_2)+\lambda_3I(S;\hat{S}_1)+\lambda_4I(S;\hat{S}_2)+\lambda_5H(\hat{S}_1|U_0)+\lambda_6H(\hat{S}_2|U_0)\\
&+\lambda_7H(\hat{S}_1|U_0,U_1)+\lambda_8H(\hat{S}_2|U_0,U_2)+\lambda_9H(\hat{S}_1|U_0,U_2)+\lambda_{10}H(\hat{S}_2|U_0,U_1),
\end{align*}
which depends on $p_{U_0,U_1,U_2,S}$ only through $p_{U_0,U_1,S}$ and $p_{U_0,U_2,S}$.
First fix $p_{U_0}$. For every $U_0=u_0$, one can find $p_{S,U_1|U_0}(\cdot,\cdot|u_0)$ with $|\mathcal{U}_1|\leq|\mathcal{S}|$ that preserves $p_{S|U_0}(\cdot|u_0)$ and $\lambda_7H(\hat{S}_1|U_0=u_0,U_1)+\lambda_{10}H(\hat{S}_2|U_0=u_0,U_1)$; similarly, one can find $p_{S,U_2|U_0}(\cdot,\cdot|u_0)$ with $|\mathcal{U}_2|\leq|\mathcal{S}|$ that preserves $p_{S|U_0}(\cdot|u_0)$ and $\lambda_8H(\hat{S}_2|U_0=u_0,U_2)+\lambda_{9}H(\hat{S}_1|U_0=u_0,U_2)$. We can get a consistent joint distribution $p_{U_0,U_1,U_2,S}$ by setting $p_{U_0,U_1,U_2,S}=p_{U_0,S}p_{U_1|U_0,S}p_{U_2|U_0,S}$. Finally, it suffices to have $|\mathcal{U}_0|\leq|\mathcal{S}|+5$ for preserving $p_S$, $H(\hat{S}_i|U_0)$, $H(\hat{S}_i|U_0,U_1)$, and $H(\hat{S}_i|U_0,U_2)$, $i=1,2$.

Let $(d_1,d_2)$ be a distortion pair that is achievable under distortion measures $w_1$ and $w_2$ subject to bandwidth expansion constraint $\kappa$. In view of Definition \ref{def:systemA2} and the discussion in Section \ref{sec:versus}, for every $\epsilon>0$, there exists a virtual broadcast channel $p_{\hat{S}^{(\epsilon)}_1,\hat{S}^{(\epsilon)}_2|S}$ realizable through the physical broadcast channel $p_{Y_1,Y_2|X}$ with bandwidth expansion ratio $\rho\leq\kappa+\epsilon$ such that $\mathbb{E}[w_i(S,\hat{S}^{(\epsilon)}_i)]\leq d_i+\epsilon$, $i=1,2$. It follows by (\ref{eq:eqstate}) that, for such $p_{\hat{S}^{(\epsilon)}_1,\hat{S}^{(\epsilon)}_2|S}$, we have
\begin{align*}
\mathcal{R}(p_S,p_{\hat{S}^{(\epsilon)}_1,\hat{S}^{(\epsilon)}_2|S})\subseteq(\kappa+\epsilon)\mathcal{R}(p_{X^{(\epsilon)}},p_{Y_1,Y_2|X})
\end{align*}
for some $p_{X^{(\epsilon)}}$. Since  $\{(p_{\hat{S}^{(\epsilon)}_1,\hat{S}^{(\epsilon)}_2|S}, p_{X^{(\epsilon)}}):\epsilon>0\}$ can be viewed as a subset of $\{(\pi,\pi')\in\mathbb{R}^{|\mathcal{S}|\times|\hat{\mathcal{S}}_1|\times|\hat{\mathcal{S}}_2|}_+\times\mathbb{R}^{|\mathcal{X}|}_+:\sum_{\hat{s}_1\in\hat{\mathcal{S}}_1, \hat{s}_2\in\hat{\mathcal{S}}_2}\pi(s,\hat{s}_1,\hat{s}_2)=1, s\in\mathcal{S}, \mbox{ and }\sum_{x\in\mathcal{X}}\pi'(x)=1\}$, which is compact under the Euclidean distance, one can find a sequence $\epsilon_1, \epsilon_2, \cdots$ converging to zero such that
\begin{align*}
&\lim\limits_{k\rightarrow\infty}p_{\hat{S}^{(\epsilon_k)}_1,\hat{S}^{(\epsilon_k)}_2|S}= p_{\hat{S}_1,\hat{S}_2|S},\\
&\lim\limits_{k\rightarrow\infty}p_{X^{(\epsilon_k)}}=p_{X}
\end{align*}
for some $p_{\hat{S}_1,\hat{S}_2|S}$ with $\mathbb{E}[w_i(S,\hat{S}_i)]\leq d_i$, $i=1,2$, and $p_X$. Now a simple limiting argument yields
\begin{align}
\mathcal{R}(p_S,p_{\hat{S}_1,\hat{S}_2|S})\subseteq\kappa\mathcal{R}(p_{X},p_{Y_1,Y_2|X}).\label{eq:covering}
\end{align}
Note that $\mathcal{R}(p_{X},p_{Y_1,Y_2|X})$ is a convex set. As a consequence, (\ref{eq:covering}) holds if and only if $\kappa\mathcal{R}(p_{X},p_{Y_1,Y_2|X})$ contains all extreme points of $\mathcal{R}(p_S,p_{\hat{S}_1,\hat{S}_2|S})$. To realize all such extreme points, it suffices to consider $|\mathcal{U}_0|\leq|\mathcal{S}|$. This completes the proof of Theorem \ref{thm:ordering}.

\section{Proof of Theorem \ref{thm:degenerate}}\label{app:degenerate}

We shall only prove that (\ref{eq:degenerate}) implies the necessary condition in Theorem \ref{thm:losslessnew} when $S_1\leftrightarrow S_0\leftrightarrow S_2$ form a Markov chain since the other direction is straightforward.

Note that the necessary condition in Theorem \ref{thm:losslessnew} can be written equivalently as
\begin{align}
\mathcal{R}(p_{(S_1,S_2)})\subseteq\kappa\mathcal{R}(p_X, p_{Y_1,Y_2|X})\label{eq:ratecomp}
\end{align}
for some $p_X$, where $\mathcal{R}(p_{(S_1,S_2)})$ is the set of $(r_1,\cdots,r_{10})\in\mathbb{R}^{10}_+$ satisfying
\begin{align*}
&r_1\leq I(U;S_1),\\
&r_2\leq I(U;S_2),\\
&r_3\leq H(S_1),\\
&r_4\leq H(S_2),\\
&r_5\leq I(U;S_1)+H(S_2|U),\\
&r_6\leq I(U;S_2)+H(S_1|U),\\
&r_7\leq I(U;S_1)+H(S_1,S_2|U),\\
&r_8\leq I(U;S_2)+H(S_1,S_2|U),\\
&r_9\leq I(U;S_1)+H(S_1,S_2|U),\\
&r_{10}\leq I(U;S_2)+H(S_1,S_2|U)
\end{align*}
for some $p_{U,(S_1,S_2)}=p_{U|(S_1,S_2)}p_{(S_1,S_2)}$ with $|\mathcal{U}|\leq|\mathcal{S}|+2$, and $\mathcal{R}(p_X, p_{Y_1,Y_2|X})$ is defined in Appendix \ref{app:TheoremI}. On the other hand, (\ref{eq:degenerate})  can be written equivalently as
\begin{align}
(H(S_0),H(S_1|S_0),H(S_2|S_0))\in\kappa\mathcal{C}_{\textsf{out}}(p_X,p_{Y_1,Y_2|X})\label{eq:degeneeratee2}
\end{align}
for some $p_X$. Therefore, it suffices to show that (\ref{eq:degeneeratee2}) implies (\ref{eq:ratecomp}) when $S_1\leftrightarrow S_0\leftrightarrow S_2$ form a Markov chain. Throughout the proof we assume $p_X$ is fixed.

It is clear that both $\mathcal{R}(p_{(S_1,S_2)})$ and $\mathcal{R}(p_X, p_{Y_1,Y_2|X})$ are closed convex sets. Let $\lambda_1,\cdots,\lambda_{10}$ be arbitrary non-negative numbers. We have
\begin{align}
&\max\limits_{(r_1,\cdots,r_{10})\in\mathcal{R}(p_{(S_1,S_2)})}\sum\limits_{i=1}^{10}\lambda_ir_i\nonumber\\
&=\max\limits_{p_{U|(S_1,S_2)}}\lambda_1I(U;S_1)+\lambda_2I(U;S_2)+\lambda_3 H(S_1)\nonumber\\
&\hspace{0.64in}+\lambda_4 H(S_2)+\lambda_5[ I(U;S_1)+H(S_2|U)]\nonumber\\
&\hspace{0.64in}+\lambda_6[I(U;S_2)+H(S_1|U)]\nonumber\\
&\hspace{0.64in}+\lambda_7[I(U;S_1)+H(S_1,S_2|U)]\nonumber\\
&\hspace{0.64in}+\lambda_8[I(U;S_2)+H(S_1,S_2|U)]\nonumber\\
&\hspace{0.64in}+\lambda_9[I(U;S_1)+H(S_1,S_2|U)]\nonumber\\
&\hspace{0.64in}+\lambda_{10}[I(U;S_2)+H(S_1,S_2|U)]\label{eq:max}\\
&=\max\limits_{p_{U|(S_1,S_2)}}(\lambda_1+\lambda_3+\lambda_5+\lambda_7+\lambda_9)H(S_1)\nonumber\\
&\hspace{0.64in}+(\lambda_2+\lambda_4+\lambda_6+\lambda_8+\lambda_{10})H(S_2)\nonumber\\
&\hspace{0.64in}-(\lambda_1+\lambda_5-\lambda_6+\lambda_7+\lambda_9)H(S_1|U)\nonumber\\
&\hspace{0.64in}-(\lambda_2-\lambda_5+\lambda_6+\lambda_8+\lambda_{10})H(S_2|U)\nonumber\\
&\hspace{0.64in}+(\lambda_7+\lambda_8+\lambda_9+\lambda_{10})H(S_1,S_2|U)\nonumber\\
&=\max\limits_{p_{U|(S_1,S_2)}}(\lambda_1+\lambda_3+\lambda_5+\lambda_7+\lambda_9)H(S_1)\nonumber\\
&\hspace{0.64in}+(\lambda_2+\lambda_4+\lambda_6+\lambda_8+\lambda_{10})H(S_2)\nonumber\\
&\hspace{0.64in}-(\lambda_1+\lambda_2)H(S_0|U)\nonumber\\
&\hspace{0.64in}-(\lambda_1+\lambda_5-\lambda_6+\lambda_7+\lambda_9)H(S_1|S_0,U)\nonumber\\
&\hspace{0.64in}-(\lambda_2-\lambda_5+\lambda_6+\lambda_8+\lambda_{10})H(S_2|S_0,U)\nonumber\\
&\hspace{0.64in}+(\lambda_7+\lambda_8+\lambda_9+\lambda_{10})H(S_1,S_2|S_0,U)\nonumber\\
&\leq\max\limits_{p_{U|(S_1,S_2)}}(\lambda_1+\lambda_3+\lambda_5+\lambda_7+\lambda_9)H(S_1)\nonumber\\
&\hspace{0.64in}+(\lambda_2+\lambda_4+\lambda_6+\lambda_8+\lambda_{10})H(S_2)\nonumber\\
&\hspace{0.64in}-(\lambda_1+\lambda_2)H(S_0|U)\nonumber\\
&\hspace{0.64in}-(\lambda_1+\lambda_5-\lambda_6-\lambda_8-\lambda_{10})H(S_1|S_0,U)\nonumber\\
&\hspace{0.64in}-(\lambda_2-\lambda_5+\lambda_6-\lambda_7-\lambda_{9})H(S_2|S_0,U),\label{eq:maxrelax}
\end{align}
where the last inequality follows from the fact that
\begin{align}
H(S_1,S_2|S_0,U)\leq H(S_1|S_0,U)+H(S_2|S_0,U).\label{eq:inequality}
\end{align}
Let $a=\lambda_1+\lambda_5-\lambda_6-\lambda_8-\lambda_{10}$ and $b=\lambda_2-\lambda_5+\lambda_6-\lambda_7-\lambda_{9}$. Consider the following four possible cases.
\begin{enumerate}
\item $a\leq 0$ and $b\leq0$: The maximum value of (\ref{eq:maxrelax}) is attained when $U=S_0$.

\item $a\geq 0$ and $b\leq 0$:  The maximum value of (\ref{eq:maxrelax}) is attained when $U=S_1$.

\item $a\leq 0$ and $b\geq 0$:  The maximum value of (\ref{eq:maxrelax}) is attained when $U=S_2$.

\item $a\geq 0$ and $b\geq 0$:  The maximum value of (\ref{eq:maxrelax}) is attained when $U=(S_1,S_2)$.
\end{enumerate}
It is clear that the equality holds in (\ref{eq:inequality}) for the following four choices of $U$:
\begin{enumerate}
\item $U=S_0$,

\item $U=S_1$,

\item $U=S_2$,

\item $U=(S_1,S_2)$.
\end{enumerate}
Therefore, the maximum value of (\ref{eq:max}) is also attained by one of these four choices of $U$; as a consequence, for  the necessary condition in Theorem \ref{thm:losslessnew}, there is no loss of generality in restricting $U$ to such choices. Note that (\ref{eq:degeneeratee2}) can be expressed alternatively as
\begin{align}
&H(S_0)\leq\kappa\min\{I(V^*_0;Y_1),I(V^*_0;Y_2)\},\label{eq:tver1}\\
&H(S_1)\leq\kappa[\min\{I(V^*_0;Y_1),I(V^*_0;Y_2)\}+I(V^*_1;Y_1|V^*_0)],\label{eq:tver2}\\
&H(S_2)\leq\kappa[\min\{I(V^*_0;Y_1),I(V^*_0;Y_2)\}+I(V^*_2;Y_2|V^*_0)],\label{eq:tver3}\\
&H(S_1,S_2)\leq\kappa[\min\{I(V^*_0;Y_1),I(V^*_0;Y_2)\}+I(V^*_1;Y_1|V^*_0)+I(X;Y_2|V^*_0,V^*_1)],\label{eq:tver4}\\
&H(S_1,S_2)\leq\kappa[\min\{I(V^*_0;Y_1),I(V^*_0;Y_2)\}+I(V^*_2;Y_2|V^*_0)+I(X;Y_1|V^*_0,V^*_2)]\label{eq:tver5}
\end{align}
for some $p_{V^*_0,V^*_1,V^*_2,X,Y_1,Y_2}=p_{V^*_0,V^*_1,V^*_2|X}p_Xp_{Y_1,Y_2|X}$. Setting $U=S_0$ in Theorem \ref{thm:losslessnew} yields the same set of constraints. On the other hand, when $U=S_1$, the necessary condition in Theorem \ref{thm:losslessnew} can be written as
\begin{align}
&H(S_0)\leq\kappa I(V_0;Y_2),\label{eq:ver1}\\
&H(S_1)\leq\kappa I(V_0;Y_1),\label{eq:ver2}\\
&H(S_2)\leq\kappa I(V_0,V_2;Y_2),\label{eq:ver3}\\
&H(S_2)\leq\kappa[I(V_0;Y_2)+I(V_1;Y_1|V_0)+I(X;Y_2|V_0,V_1)],\label{eq:ver4}\\
&H(S_1,S_2)\leq\kappa[I(V_0;Y_1)+I(V_2;Y_2|V_0)],\label{eq:ver5}\\
&H(S_1,S_2)\leq\kappa[I(V_0,V_1;Y_1)+I(X;Y_2|V_0,V_1)]\label{eq:ver6}
\end{align}
for some $p_{V_0,V_1,V_2,X,Y_1,Y_2}=p_{V_0,V_1,V_2|X}p_Xp_{Y_1,Y_2|X}$. By choosing  $V_0=V_1=(V^*_0,V^*_1)$ and $V_2=X$, we can see that
(\ref{eq:tver1})$\Rightarrow$(\ref{eq:ver1}), (\ref{eq:tver2})$\Rightarrow$(\ref{eq:ver2}), (\ref{eq:tver3})$\Rightarrow$(\ref{eq:ver3}), (\ref{eq:tver3})$\Rightarrow$(\ref{eq:ver4}), (\ref{eq:tver4})$\Rightarrow$(\ref{eq:ver5}), and (\ref{eq:tver4})$\Rightarrow$(\ref{eq:ver6}). The case $U=S_2$ follows by symmetry. When $U=(S_1,S_2)$,  the necessary condition in Theorem \ref{thm:losslessnew} can be written as
\begin{align}
&H(S_1)\leq\kappa I(V_0;Y_1),\label{eq:s1s2a}\\
&H(S_2)\leq\kappa I(V_0;Y_2),\label{eq:s1s2b}
\end{align}
for some $p_{V_0,X,Y_1,Y_2}=p_{V_0|X}p_Xp_{Y_1,Y_2|X}$. By choosing $V_0=X$, we can see that (\ref{eq:tver2})$\Rightarrow$(\ref{eq:s1s2a}) and (\ref{eq:tver3})$\Rightarrow$(\ref{eq:s1s2b}).
 Hence, (\ref{eq:ratecomp}) is indeed implied by (\ref{eq:degeneeratee2}) when $S_1\leftrightarrow S_0\leftrightarrow S_2$ form a Markov chain.
This completes the proof of Theorem \ref{thm:degenerate}.

\section{Proof of Lemma \ref{lem:capacitycomparison}}\label{app:capacitycomparison}

Let $(\hat{S}^m_{1},\hat{S}^m_{2})$ be jointly distributed with $S^m$ according to
\begin{align*}
p_{S^m}(s^m)p_{\hat{S}^m_1,\hat{S}^m_2|S^m}(\hat{s}^m_1,\hat{s}^m_2|s^m),
\end{align*}
where $p_{S^m}(s^m)=\prod_{t=1}^mp_S(s(t))$. We assume that $p_{\hat{S}^m_1,\hat{S}^m_2|S^m}$ is degraded with respect to $p_{Y^{\rho m}_1,Y^{\rho m}_2|X^m}$, where $p_{Y^{\rho m}_1,Y^{\rho m}_2|X^m}(y^{\rho m}_1,y^{\rho m}_2|x^m)=\prod_{q=1}^{\rho m}p_{Y_1,Y_2|X}(y_1(q),y_2(q)|x(q))$. As a consequence,
\begin{align}
\rho\mathcal{C}(p_{\hat{S}^m_1,\hat{S}^m_2|S^m})\subseteq\mathcal{C}(p_{Y^{\rho m}_1,Y^{\rho m}_2|X^m})=\rho m\mathcal{C}(p_{Y_1,Y_2|X}).\label{eq:capcomb1}
\end{align}
Let $\tilde{\mathcal{C}}_{\textsf{in}}(p_{S^m},p_{\hat{S}^m_1,\hat{S}^m_2|S^m})$ denote the set of $(R_0,R_1,R_2)\in\mathbb{R}^3_+$ satisfying
\begin{align*}
&R_0\leq\min\{I(U^m_{0};\hat{S}^m_{1}),I(U^m_{0};\hat{S}^m_{2})\},\\
&R_0+R_i\leq I(U^m_{0},U^m_{i};\hat{S}^m_{i}),\quad i=1,2,\\
&R_0+R_1+R_2\leq\min\{I(U^m_{0};\hat{S}^m_{1}),I(U^m_{0};\hat{S}^m_{2})\}+I(U^m_{1};\hat{S}^m_{1}|U^m_{0})+I(U^m_{2};\hat{S}^m_{2}|U^m_{0})-I(U^m_{1};U^m_{2}|U^m_{0})
\end{align*}
for some $(U^m_{0},U^m_{1},U^m_{2})$ be jointly distributed with $(S^m,\hat{S}^m_{1},\hat{S}^m_{2})$ such that $(U^m_{0},U^m_{1},U^m_{2})\leftrightarrow S^m\leftrightarrow(\hat{S}^m_{1},\hat{S}^m_{2})$ form a Markov chain,
and $(U_0(t),U_1(t),U_2(t),S(t))$, $t=1,\cdots,m$, are independent and identically distributed. It is clear that
\begin{align}
\tilde{\mathcal{C}}_{\textsf{in}}(p_{S^m},p_{\hat{S}^m_1,\hat{S}^m_2|S^m})\subseteq\mathcal{C}_{\textsf{in}}(p_{S^m},p_{\hat{S}^m_1,\hat{S}^m_2|S^m})\subseteq\mathcal{C}(p_{\hat{S}^m_1,\hat{S}^m_2|S^m}).\label{eq:capcomb2}
\end{align}
Let $T$ be a random variable independent of $(U^m_{0,1},U^m_{1,1},U^m_{2,1},S^m_1,\hat{S}^m_{1,1},\hat{S}^m_{2,1})$ and uniformly distributed over $\{1,\cdots,m\}$. Define
\begin{align*}
&U_i=U_i(T),\quad i=0,1,2,\\
&S=S(T),\\
&\hat{S}_i=\hat{S}_i(T),\quad i=1,2.
\end{align*}
Note that
\begin{align*}
I(U^m_{0};\hat{S}^m_{i})&=\sum\limits_{t=1}^mI(U_0(t);\hat{S}^m_{i}|U^{t-1}_{0})\\
&=\sum\limits_{t=1}^mI(U_0(t);\hat{S}^m_{i},U^{t-1}_{0})\\
&\geq\sum\limits_{t=1}^mI(U_0(t);\hat{S}_{i}(t))\\
&=mI(U_0(T);\hat{S}_i(T)|T)\\
&=mI(U_0(T);\hat{S}_i(T),T)\\
&\geq mI(U_0(T);\hat{S}_i(T))\\
&=mI(U_0;\hat{S}_i),\quad i=1,2;
\end{align*}
moreover,
\begin{align*}
I(U^m_{i};\hat{S}^m_{i}|U^m_{0})&=\sum\limits_{t=1}^mI(U_i(t);\hat{S}^m_{i}|U^m_{0},U^{t-1}_{i}),\\
&=\sum\limits_{t=1}^mI(U_i(t);\hat{S}^m_{i},U^{t-1}_{0},U^m_{0,t+1},U^{t-1}_{i}|U_{0}(t))\\
&\geq\sum\limits_{t=1}^mI(U_i(t);\hat{S}_i(t)|U_0(t))\\
&=mI(U_i(T);\hat{S}_i(T)|U_0(T),T)\\
&=mI(U_i(T);\hat{S}_i(T),T|U_0(T))\\
&\geq mI(U_i(T);\hat{S}_i(T)|U_0(T))\\
&=mI(U_i;\hat{S}_i|U_0),\quad i=1,2,
\end{align*}
and
\begin{align*}
I(U^m_{1};U^m_{2}|U^m_{0})=mI(U_1;U_2|U_0).
\end{align*}
Therefore, we have
\begin{align}
m\mathcal{C}_{\textsf{in}}(p_S, p_{\hat{S}_1,\hat{S}_2|S})\subseteq\tilde{\mathcal{C}}_{\textsf{in}}(p_{S^m},p_{\hat{S}^m_1,\hat{S}^m_2|S^m}).\label{eq:capcomb3}
\end{align}
Combining (\ref{eq:capcomb1}), (\ref{eq:capcomb2}), and (\ref{eq:capcomb3}) completes the proof of Lemma \ref{lem:capacitycomparison}.


\section*{Acknowledgment}

We would like to thank Prof. Chandra Nair and Prof. Amin Gohari for answering our numerous questions regarding broadcast channels. Furthermore, after seeing the conference version \cite{KC14} of the present paper, Prof. Gohari sent us a preprint of \cite{GA13b} (coauthored with  Venkat Anantharam), which contains, among other things, a computable characterization \cite[Corollary 2]{GA13b} of an earlier version of the Gohari-Anantharam outer bound \cite[Theorem 3]{GA08}, \cite[Theorem 3]{GA13a} as well as a necessary condition for the lossy source broadcast problem  \cite[Theorem 2]{GA13b}. 



\end{document}